\def\BibTeX{{\rm B\kern-.05em{\sc i\kern-.025em b}\kern-.08em
		T\kern-.1667em\lower.7ex\hbox{E}\kern-.125emX}}
\begin{document}
	\title{From Data-driven Learning to Physics-inspired Inferring: A Novel Mobile MIMO Channel Prediction Scheme Based on Neural ODE}
	\author{Zhuoran~Xiao,~\IEEEmembership{Student~Member,~IEEE, }
			Zhaoyang~Zhang,~\IEEEmembership{Senior~Member,~IEEE, }
			Zirui~Chen,~\IEEEmembership{Student~Member,~IEEE, }
			Zhaohui~Yang,~\IEEEmembership{Member,~IEEE, }\\
            Chongwen~Huang,~\IEEEmembership{Member,~IEEE, }
            and Xiaoming~Chen,~\IEEEmembership{Senior~Member,~IEEE }
		\thanks{This work was supported in part by National Key R\&D Program of China under Grant 2020YFB1807101 and 2018YFB1801104, National Natural Science Foundation of China under Grant U20A20158 and 61725104, and Provincial Key R\&D Program of Zhejiang under Grant 2023C01021. }
		\thanks{Z.~Xiao, Z.~Zhang (Corresponding Author), Z.~Chen, Z.~Yang, C.~Huang, and X.~Chen are with College of Information Science and Electronic Engineering, Zhejiang University, Hangzhou 310027, China, and with International Joint Innovation Center, Zhejiang University, Haining 314400, China, and also with Zhejiang Provincial Key Laboratory of Info. Proc., Commun. \& Netw. (IPCAN), Hangzhou 310007, China. (e-mail: \{zhuoranxiao, ning\_ming, ziruichen, yang\_zhaohui, chongwenhuang, chen\_xiaoming\}@zju.edu.cn) }
	}
	
	\maketitle
	
\begin{abstract}
		
In this paper, we propose an innovative learning-based channel prediction scheme so as to achieve higher prediction accuracy and reduce the requirements of huge amounts and strict sequential format of channel data. 
Inspired by the idea of the neural ordinary differential equation (Neural ODE), we first prove that the channel prediction problem can be modeled as an ODE problem with a known initial value by analyzing the physical process of electromagnetic wave propagation within a mobile environment. Then, we design a novel physics-inspired spatial channel gradient network (SCGnet), which represents the derivative process of channel varying as a special neural network and can obtain the gradients at any relative displacement needed for the ODE solving. With the SCGnet, the static channel at any location served by the base station is accurately inferred through consecutive propagation and integration. Finally, we design an efficient recurrent positioning algorithm based on some prior knowledge of user mobility to obtain the velocity vector and propose an approximate Doppler compensation method to make up the instantaneous angular-delay domain channel. Only discrete historical channel data is needed for the training, whereas only a few fresh channel measurements are needed for the prediction, which ensures the scheme's practicability. 
Comprehensive evaluations show that the proposed scheme is most efficient in representing, learning, and predicting mobile wireless channels.
		
\end{abstract}
	
	\begin{IEEEkeywords}
		MIMO, Machine learning, Channel prediction, Neural ODE, Physics-inspired learning
	\end{IEEEkeywords}

\section{Introduction}
\subsection{Motivation}
Obtaining accurate channel state information (CSI) so as to achieve optimal system design and deployment is crucial in the future wireless communication systems \cite{9427230}. However, accurate CSI acquisition is challenging, especially in mobile communication, due to the multi-path coupling caused by the complicated scattering environment and Doppler effect \cite{9839184}. 

In the existing wireless communication system, CSI is estimated at the receiver and then feedback to the base station (BS). This scheme has at least two potential drawbacks. On the one hand, the estimated channel's accuracy is hard to guarantee if the channel parameters to be estimated are strictly limited. Meanwhile, if we increase the number of parameters to be estimated, the communication delay and signaling overhead will undoubtedly grow \cite{8395053}. 
On the other hand, there exists a nonnegligible transmission delay for CSI data feedback. Therefore, the instantaneous channel had already changed when the BS received the feedback CSI data, which will inevitably cause a loss of accuracy.
Deterministic channel synthesis, such as the ray-tracing method, is another approach for obtaining CSI. This approach is based on the specific electromagnetic (EM) wave propagation within a natural environment. Owing to the analytical framework rooted in electromagnetic theory, this method considers the field properties of the antennas and the propagation paths. Although it can achieve favorable accuracy with excellent cost-and-time efficiency, it is usually hard to be applied in practice since it needs to know the complete knowledge of the natural environment.

In a specific communication scenario, a base station always serves a certain fixed area in which the number and locations of the scatterers are also fixed. The wireless channel is then mainly determined by the scatterers distributed therein and the corresponding multi-path propagation behavior of EM waves, including specular reflection, diffusion, diffraction, and blocking. Although there are inevitable variations in channel characteristics due to the small-scale perturbation of transmission media, the long-term multi-path components of a channel can still be regarded as quasi-static given that the scatterers within that area do not change \cite{9625179,4536852,9791407,4735375}. Moreover, when considering the mobile scene, the wireless channel can be calculated by compensating the Doppler shift to the static channel. So, it is reasonable that with the velocity vector and a set of randomly measured static CSI samples, we can predict the CSI of a user equipment (UE) moving at a certain point within the served area. Moreover, there are always a large number of channel measurements with corresponding locations produced by the communication processes occurring in the served area of the BS. These data can be potentially made use of, and thus the signaling overhead and measurement cost can be significantly reduced. This inspires us to study the prediction of the mobile channel within a specific communication environment. 

Due to the complexity characteristic of multi-path channels, we resort to learning-based methods for solving this prediction problem for the higher representing ability due to larger parameter space. Basically, our goal is to design a learning system that can automatically learn the fixed scattering environment and implicitly represent the physics process of the wireless channel changing within space. The whole learning system can be trained at the training stage with only a few discrete static channel samples randomly recorded in the historical communication process. Our goal is to significantly reduce the difficulty of data acquisition and the demand for data volume. At the inferring stage, only a sequence of channels obtained in a past period of time is needed. All the other key factors, such as the UE's motion vector and the UE's position, are automatically learned. The mobile channel frequency response (CFR) is the output target of the whole system.
	
\subsection{Related Works}

In the past few years, comprehensive works have been done in CSI prediction. In \cite{8932272,8116491}, the noise from the measured channel impulse response (CIR) is removed by neural networks, and the principal component analysis (PCA) was utilized to exploit the features and structures of the channel. In \cite{2016OnThe}, based on received signal strength measurements and a 3-D map of the propagation environment, the support vector machine (SVM) was used to predict the path loss for smart metering applications. Taking the receiver (RX) and transmitter (TX) heights, TX-RX distance, carrier frequency, and intercede range as the inputs, the path loss is predicted using the radial basis function neural network (RBF-NN) in \cite{554747}. In \cite{6748900}, the fully connected neural network (FNN) and RBF-NN were combined with ray launching in complex indoor environments for predicting the intermediate points in the ray launching algorithm and further decreasing the computation complexity. The authors in \cite{7590098} take TX-RX distance and diffraction loss as the inputs, and the multi-layer perceptron (MLP) was applied to predict the received signal strength. 

Most existing works on channel prediction incline to treat it as a sequence prediction problem. Two conventional prediction approaches, namely parametric model \cite{6945858} and auto-regressive model \cite{841729} have been proposed through statistical modeling of a wireless channel as a set of radio propagation parameters. Since there is a nonnegligible gap between the model and the real wireless channel, the potential achievable prediction accuracy is generally unfavorable, which highly limits their actual application. Motivated by the success of machine learning algorithms in time-series prediction, some recent works have attempted to apply learning-based methods for channel prediction. In \cite{2002Recurrent,6755477}, the recurrent neural network (RNN) is applied to predict narrow-band single-antenna channels. In \cite{9128426}, RNN is further replaced by the gated recurrent unit (GRU) and a long short-term memory (LSTM). Through viewing the sequential CSI as an image, CNN+LSTM structure is used for channel prediction in \cite{9277535}. In \cite{9569281}, a generative periodic activator-enabled Auto-Encoder LSTM network is proposed as an improved method of normal LSTM. The learning structure CNN+AR is proposed as a substitution of LSTM in \cite{8979256}. In \cite{8904286}, the Seq2Seq network is applied. Besides, the ODE-RNN structure is introduced in \cite{9978073}. It is worth mentioning that the basic idea behind \cite{9978073} is quite different from this paper since therein the ODE-RNN is proposed as an upgrade of normal RNN for solving sequence prediction problems while here we attempt to predict channel with only discrete CSI samples employing an entirely newly designed network structure.

Although the existing learning-based methods for channel prediction can achieve better prediction accuracy than traditional methods, they are still insufficient for practical application. First of all, all the methods, including \cite{9978073}, require a dataset composed of a series of channel sequences. However, it should be emphasized that storing large quantities of sequence data is not compatible with the existing wireless communication protocols and will cause huge challenges to storage capabilities for the BS and potential data transmission overhead if there exists a cloud server for model training. All the above facts make the state-of-the-art infeasible in practice. In fact, in real scenarios, only the discrete CSI recorded at a certain position rather than a data sequence is available since this kind of data can be easily obtained from historical records of those service vendors, such as the communication operators. The proposed methods in this paper are designed to deal with the setting that the BS only stores some discrete CSI data sampled randomly in the served area. Thus, the difficulty in obtaining and storing training data can be greatly reduced and can be much more practical and flexible at the deployment stage. Secondly, the accuracy of sequence prediction by a purely data-driven learning network highly relies on the numerical correlation and smoothness of the data sequence. Therefore, when the interval of data sequence increases, the prediction accuracy of those methods significantly degrades. The main reason is that the network structures used therein mainly focus on the common data correlation between the discrete elements of all sequences while overlooking the fact that the actual channel changing is a unique physical continuous process with specific properties. Therefore, no additional prior information is integrated into the learning network, which limits the prediction accuracy.
	
\subsection{Contributions}
In fact, the change of CSI comes from the position change, which causes the change of propagation paths. The changing of CSI is essentially a physics process driven by the propagation characteristics of electromagnetic waves in space and users' movement characteristics, which contains a lot of implicit prior knowledge. Thus, rather than predict the channel in the original high dimension and unsmooth data space by learning the correlation between discrete sampling points, we propose to learn the implicit ordinary differential equations representing the complete physics process of continuous channel changing. Thus, we come up with a novel channel prediction scheme that is composed of three segments. Firstly, we propose a specially designed physics-inspired SCGnet to represent the gradients of the derivative process of channel varying with respect to space. Further, the output of which is then integrated to obtain the targeted output in the way of Neural ODE. Discrete channel pairs with their position coordinates are used to train the network through the adjoint method \cite{2018Optimization}. Then, we propose an iterative algorithm for user positioning and motion vector extraction based on a novel-designed positioning neural network. Based on this, we transfer the original prediction problem from a high and unsmooth data space to a low and smooth one, significantly reducing learning difficulty. Thirdly, an approximating Doppler compensation method making use of the characteristic of the angular-delay domain channel is proposed. Then, the targeted mobile channel can be obtained by compensating the Doppler phase shift on the predicted static channel.    
		
To summarize, our main contributions are as follows:
\begin{itemize}
\item 
We prove that the gradient of CSI with respect to space is only decided by the CSI itself and spatial direction. So, we propose to model channel prediction as an ODE problem with a known initial value. The Neural ODE learning structure is proposed to solve the problem, and a physics-inspired SCGnet is designed to implicitly represent the ODE function. Only discrete samples are needed to train the network, which can be easily obtained from the historical communication data. 

\item 
An iterative algorithm for user positioning and motion vector extraction is proposed based on a specially designed positioning neural network. By making use of the prior knowledge of the user's mobility and the ODE-guided change of its underlying scattering environment, the positioning accuracy can far exceed that of the positioning methods with a single discrete point. With the predicted user's velocity vector, an approximate Doppler compensation algorithm is proposed for calculating the mobile channel. 

\item Key performance measures, including network scale, prediction accuracy, and dependence on training data, are comprehensively evaluated and compared with the state-of-the-art results, which shows that our proposed scheme is much more efficient in learning, representing, and predicting wireless channels in a given communication environment. 
\end{itemize}
	
The remainder of this paper is organized as follows. The channel model and the problem formulation of channel prediction are introduced in section \ref{chap:Problem}. The reason why we introduce Neural ODE for channel prediction is given in section \ref{chap:Neural ODE}. Section \ref{chap:scheme} introduces the whole channel prediction scheme proposed by us, which is composed of three parts. Numerical evaluation from different perspectives is provided in section \ref{chap:results}. Section \ref{chap:conclusion} draws the conclusion.
	
\section{Problem Formulation}\label{chap:Problem}
\subsection{Channel Model} \label{chap:channel}
Without losing generality, we assume that the base station (BS) is equipped with a uniform linear array (ULA) with half-wavelength spacing between two adjacent antennas. Besides that, orthogonal frequency-division multiplexing (OFDM) modulation is adopted. The UE has a single omnidirectional antenna. The BS has ${N_t}$ antennas, and there are ${N_c}$ subcarriers for OFDM signals. 
The static channel frequency response (CFR) for each subcarrier can be formulated as
\begin{equation}\label{Sye_CFR_eq1}
{\bf{h}}[l] = \sum\limits_{p = 1}^{N_{\rm{path}}} {{\alpha _p}{\bf{e}}({\theta _p})} {e^{ - j2\pi {{{d_p}} \over {{\lambda _l}}}}},
\end{equation}
and if the UE's mobility is considered, the mobile CFR for each subcarrier can be formulated as 
\begin{equation}\label{Sye_CFR_eq2}
{{\bf{h}}_{{\rm{mobile}}}}[l] = \sum\limits_{p = 1}^{N_{\rm{path}}} {{\alpha _p}{\bf{e}}({\theta _p})} {e^{ - j2\pi [{{{d_p} + {v_u}\cos ({\theta _v} - {\theta _p}){\tau _p}} \over {{\lambda _l}}}]}},
\end{equation}
where $l$ denotes the subcarrier index, ${N_{\rm{path}}}$ is the total number of propagation paths, ${{\alpha _p}}$ is the path loss of $p$th path, ${{\theta_p}}$ is the angle of arrival, ${{d_p}}$ is the length of propagation path, ${{v_u}}$ denotes the velocity of user, ${{\theta _v}}$ is the direction angle of the velocity vector, ${{\tau _p}}$ is the propagation delay and ${{\lambda _l}}$ is the wavelength of corresponding subcarrier. In equation \eqref{Sye_CFR_eq1} and \eqref{Sye_CFR_eq2}, ${\bf e}(\theta)$ denotes the array response vector of the ULA given by
\begin{equation}
	{\bf{e}}(\theta ) = {[1,{e^{ - j2\pi {{d\cos (\theta )} \over \lambda }}},...,{e^{ - j2\pi {{({N_t} - 1)d\cos (\theta )} \over \lambda }}}]^T},
\end{equation}
where $d$ is the interval between two adjacent antennas and $\lambda $ is the wavelength. Thus, the overall CFR matrix of the channel between the BS and the user can be expressed as 
\begin{equation}
	{\bf{H}} = [{\bf{h}}[1],{\bf{h}}[2],...,{\bf{h}}[{N_c}]].
\end{equation}

\subsection{Angular-Delay Domain Channel Representation} \label{chap:Angular}
Angular-Delay domain channel matrix is a transformation of the CSI computed by multiplying it with two discrete Fourier transform (DFT) matrices \cite{2019Fingerprint}. Defining the DFT matrix $\bm{V} \in {{\Bbb C}^{{N_t} \times {N_t}}}$ as 
\begin{equation}
	{[\bm{V}]_{z,q}} \triangleq \frac{1}{{\sqrt {{N_t}} }}{e^{ - j2\pi \frac{{(z(q - \frac{{{N_t}}}{2}))}}{{{N_t}}}}},
\end{equation}
and $\bm{F} \in {{\Bbb C}^{{N_c} \times {N_c}}}$ as
\begin{equation}
	{[\bm{F}]_{z,q}} \triangleq \frac{1}{{\sqrt {{N_c}} }}{e^{ - j2\pi \frac{{zq}}{{{N_c}}}}}.
\end{equation}
Then, the Angular-Delay domain channel matrix $G$ is defined as 
\begin{equation}
	\bm{G} = {\bm{V}^H}\bm{H}\bm{F}
\end{equation} 
The matrix $\bm{G}$ can be considered as a uniform grid of ${N_t}$ AoAs over $[0,\pi ]$ and ${N_c}$ delay points over $[0,{{{N_c}} \over B}]$, where $B$ denotes the carrier bandwidth. Thus, In this matrix, the $(z,q)$ element represents the response of the propagation path with ${z^{th}}$ AOA and ${q^{th}}$ delay, which can be written as
\begin{equation}\label{eq1}
{\widetilde \theta _z} = \arccos \left[ {{2 \over {{N_t}}}(z - \left\lfloor {{{{N_t} - 1} \over 2}} \right\rfloor )} \right],z = 0, \ldots ,{N_t} - 1,
\end{equation}
\begin{equation}\label{eq2}
{\widetilde \tau _q} = {q \over B},q = 0, \ldots ,{N_c} - 1.
\end{equation}
In practice, the true AoAs and delay points usually do not always lie exactly on the grid points. In this case, there will be mismatches between the true AoAs, delay points, and the nearest grid point. Thus, the angular-delay channel response can be regarded as being composed of ${N_t} \times {N_c}$ equivalent paths. Due to the sparsity of the matrix $\bm{G}$, most equivalent paths' responses are close to zero. For each equivalent propagation path with $\theta $ AOA and $\tau$ delay, the complex path response is coupled by multiple subcarrier coupling, which can be written as 
\begin{equation}
{g_{z ,q }} = {\alpha _{\theta, \tau}}\sum\limits_{k = 1}^{{N_c}} {{e^{ - j2\pi {f_k}\tau }}} ,
\end{equation}

\subsection{Problem Formulation of Channel Prediction} \label{chap:channel prediction}
The goal of channel prediction is to forecast the CSI at the current and future time slots by taking full use of CSI sequence information obtained in the previous time period \cite{2020Recurrent}. Assume that the BS stores the CSI estimated in the past $n$ time slots, which can be denoted by $\{ {\bf{H}}[1],{\bf{H}}[2],...,{\bf{H}}[n]\} $. The CSI in the next time slot should be predicted, denoted by $\mathop {\bf{H}}\limits^ \wedge  [n + 1]$. 
Thus, the CSI prediction problem can be presented as 
\begin{equation}
	\{ {\bf{H}}[1],{\bf{H}}[2],...,{\bf{H}}[n]\}  \to \mathop {\bf{H}}\limits^ \wedge  [n + 1].
\end{equation}

\section{The ODE Representation of Spatial Gradient of Static Channel}\label{chap:Neural ODE}

\subsection{Overview of Neural ODE}
Neural ODE is a specially designed learning structure that has been proven to have a strong ability to represent dynamic progress, which can be originally described mathematically by ordinary differential equations \cite{NEURIPS2018_69386f6b}. In a typical Neural ODE, the state of the hidden layer is defined by the solution of the following equation,
\begin{equation}
	{{dy(x)} \over {dx}} = f(y(x),x,{\bf{I}}{(x),}{\bf{\boldsymbol\theta}}),
\end{equation}
where ${y}(x) \in {\mathbb{R}^D}$ denotes the hidden layer, $D$ is the dimension of variables in the hidden layer, ${\bf{I}}(x)$ is the input, $x$ denotes the independent variable, $f( \cdot )$ denotes a neural network with parameter $\boldsymbol\theta$. Therefore, after the network completes training, the forward calculation becomes an ordinary differential equation problem with known initial values. Thus, the forward calculation can be solved with constant or adaptive steps by numerical methods called ODE Solver. The process of ODE Solver can be written as 
\begin{equation}
	y({x_0}) = {y_0},
\end{equation}
\begin{equation}
	{{y_{\rm{output}}}} = \text{ODE Solver}(f_{\boldsymbol\theta},{y_0},{x_0},{x_{end}}),
\end{equation}
Where ${y_{\rm{output}}}$ is the output of the Neural ODE, ${y_0}$ is the initial value, ${x_0}$ is the integral starting point, and ${x_{end}}$ is the integral endpoint.

The Euler method is one of the most commonly used ODE Solvers among all the numerical methods. The procedure of the Euler method can be presented by 
\begin{equation}
	{y_{x + \Delta x}} = {y_x} + \Delta x \times f({y_x},{\boldsymbol\theta}),
\end{equation}
where $\Delta x$ is the step length used to adjust the accuracy and computation cost. In addition to the Euler method, there are some high-order solvers with adaptive step sizes. Choosing a proper ODE Solver is determined by the computational cost and accuracy trade-offs. 

In order to ensure high calculation accuracy, the ODE Solver's step size is set as a  small value, which indicates that directly using the gradient back propagation algorithm to calculate the gradient of parameters will introduce an enormous calculation cost. To solve this problem, the adjoint method is proposed in \cite{2018Optimization} to transform the gradient calculation into another ODE problem, which can be solved by the ODE solver with a low computational cost. The adaptive checkpoint adjoint method is further proposed in \cite{2020Adaptive} by adding a small amount of storage cost to handle the calculation error introduced in the adjoint method.

\subsection{The Limitation of Conventional Network for Channel Prediction}
As shown in \eqref{Sye_CFR_eq1}, the CSI matrix is determined by multipath response components interwinding. For a wireless channel, the variation mainly comes from two parts, i.e., the change of electromagnetic wave propagation paths caused by the changing spatial position and frequency selective fading caused by the Doppler effect. Due to the apparent magnitude difference of electromagnetic wavelength scale relative to the spatial changing scale of mobile users, the phase changing of channel response is fast across space. Thus, the path response shows prominent non-smooth characteristics on the spatial coordinate axis. Considering the interwinding of multiple path responses and the high-dimensional characteristics of channel matrix caused by multi-carriers and multi-antennas, from the perspective of data characteristics, the channel prediction problem is essentially a prediction problem dealing with high-dimensional and extremely non-smooth data.

For those general sequential networks such as RNN and LSTM, the dimension of data space and the data smoothness directly decide the performance of the prediction. In existing works, solving such a complex problem requires not only a high level of network learning ability but also a large amount of training data. Apparently, those requirements are hard to meet in a practical system. However, as shown in Fig. \ref{propagation}, from the perspective of the physics process of electromagnetic wave propagation, the changing of the channel is a pure physics driving process. Besides that, the propagation of the electromagnetic wave in space has some fixed physical characteristics. Thus, some unique prior information can be used by designing targeted learning architecture to improve prediction accuracy, save training data, and improve system flexibility.

\begin{figure}
	\centering
	\includegraphics[width=0.4\textwidth]{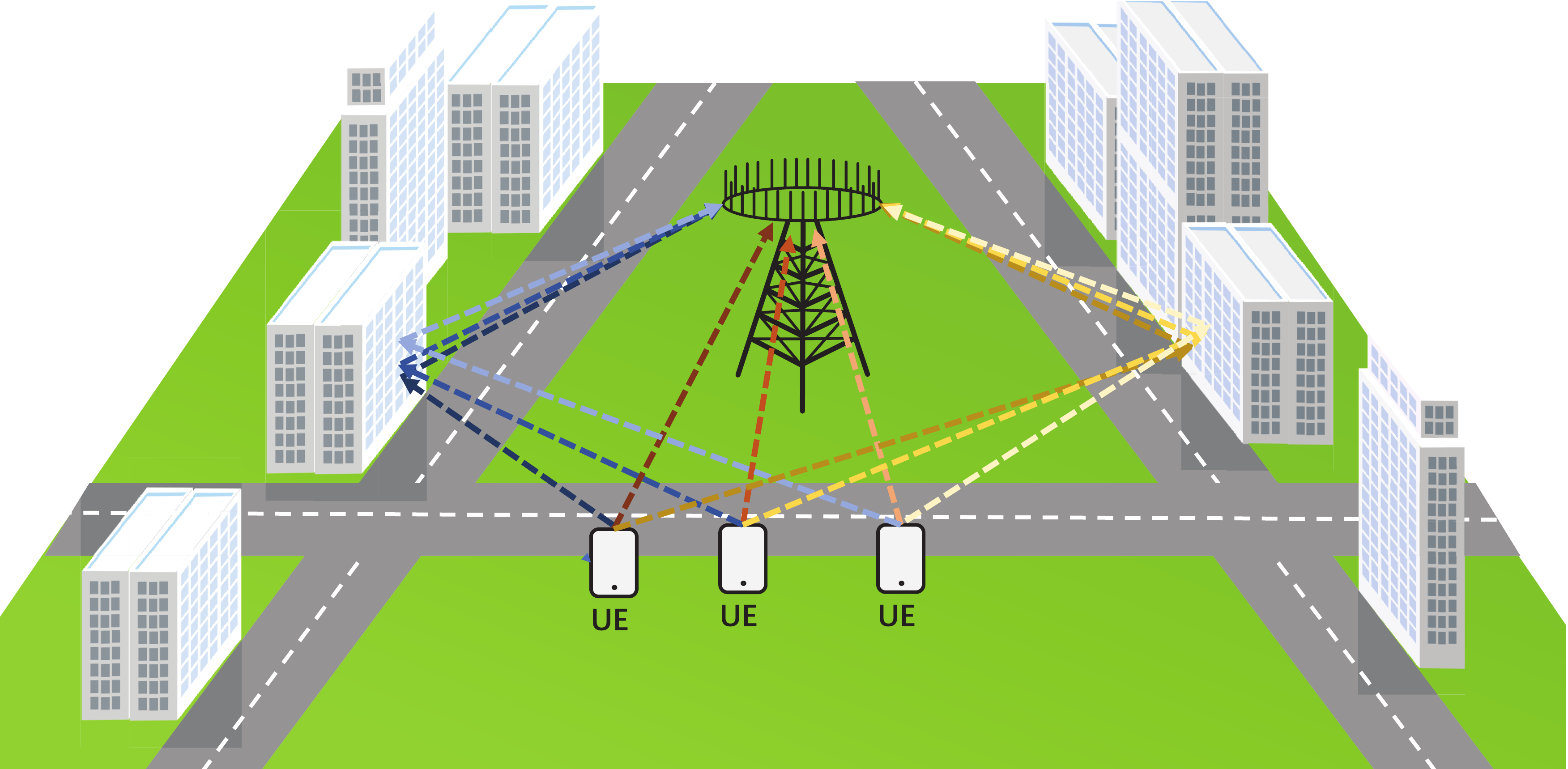}
	\vspace{-.5em}
	\caption{When UE's position change continuously in space, the changing process of propagation paths are also continuous and highly smooth.}
	\vspace{-.5em}
	\label{propagation}

\end{figure}

\subsection{Represent The Process of Channel Change with ODE}

In order to better explain the advantages of the proposed learning architecture, we provide the following proposition to show the theoretical basis of the network design proposed for CSI prediction.

\newtheorem{lemma}{\textbf{Proposition}}
\begin{lemma}\label{lemma}
For a quasi-static scattering environment, if the bidirectional mapping between of CSI $ {{\bf{H}}_u}$ and UE's position $ {{\bf{x}}_u}$ exists, the derivative of the mobile channel with respect to space is only related to the CSI at a certain point and UE's motion direction. 
\end{lemma}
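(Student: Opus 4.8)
The plan is to combine a short abstract argument that uses the bidirectional‑mapping hypothesis directly with a concrete differentiation of the channel model of Section~\ref{chap:channel} that also exposes the structure later exploited by the SCGnet.

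First I would record what ``quasi-static'' buys us: since the scatterers, their locations, and the associated EM interactions are fixed, every path parameter $\{\alpha_p,\theta_p,d_p,\tau_p\}$ is a deterministic function of the UE position $\mathbf{x}_u$, hence the static channel is a fixed smooth map $\mathbf{H}_u=\Psi(\mathbf{x}_u)$. Let $\hat{\mathbf{v}}(\theta_v)=(\cos\theta_v,\sin\theta_v)$ and let $s$ be arc length along the trajectory. The spatial derivative is then the directional derivative $\frac{d\mathbf{H}_u}{ds}=\langle\nabla_{\mathbf{x}}\Psi(\mathbf{x}_u),\hat{\mathbf{v}}(\theta_v)\rangle$; invoking the assumed bijection $\mathbf{x}_u=\Psi^{-1}(\mathbf{H}_u)$ gives $\frac{d\mathbf{H}_u}{ds}=\langle\nabla_{\mathbf{x}}\Psi(\Psi^{-1}(\mathbf{H}_u)),\hat{\mathbf{v}}(\theta_v)\rangle=:\mathcal{F}(\mathbf{H}_u,\theta_v)$, which is already the claim for the static channel. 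For the mobile channel an extra dependence on $v_u$ enters through $\Psi$, and the role of the next step is to show it drops out to leading order and to reveal the explicit form of $\mathcal{F}$.

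Second, I would differentiate \eqref{Sye_CFR_eq2} directly. Writing the per-path phase $\phi_p^{(l)}=-2\pi(d_p+v_u\cos(\theta_v-\theta_p)\tau_p)/\lambda_l$, the only geometric input needed is the elementary identity $\frac{dd_p}{ds}=-\cos(\theta_v-\theta_p)$, obtained by projecting the displacement $ds\,\hat{\mathbf{v}}(\theta_v)$ onto the last segment of path $p$, whose direction is the AoA $\theta_p$. The quasi-static hypothesis lets me discard $\frac{d\alpha_p}{ds}$, $\frac{d\theta_p}{ds}$ and the $O(v_u/c)$ terms from $\frac{d}{ds}(\cos(\theta_v-\theta_p)\tau_p)$ against the rapidly varying phase, leaving $\frac{d\mathbf{h}_{\mathrm{mobile}}[l]}{ds}\approx\frac{j2\pi}{\lambda_l}\sum_p\cos(\theta_v-\theta_p)\,\alpha_p\mathbf{e}(\theta_p)\,e^{j\phi_p^{(l)}}$. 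Passing to the angular-delay domain via $\mathbf{G}=\mathbf{V}^H\mathbf{H}\mathbf{F}$, each entry $g_{z,q}$ concentrates on the equivalent path at the grid AoA $\widetilde{\theta}_z$ of \eqref{eq1}, so $\cos(\theta_v-\theta_p)$ is replaced by the known grid value $\cos(\theta_v-\widetilde{\theta}_z)$; with the narrowband approximation $\lambda_l\approx\lambda_c$ this yields $\frac{dg_{z,q}}{ds}\approx\frac{j2\pi}{\lambda_c}\cos(\theta_v-\widetilde{\theta}_z)\,g_{z,q}$, a per-element complex scaling fixed by the grid index and $\theta_v$ alone. In particular $v_u$ has disappeared and $\frac{d\mathbf{G}}{ds}=\mathbf{D}(\theta_v)\odot\mathbf{G}$ is a function of the CSI at that point and the motion direction only, which both proves the proposition and motivates representing the derivative by a network fed with the current channel and $\theta_v$.

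I expect the main obstacle to be making the ``discard the slow terms'' step honest: one has to argue that $\frac{d\alpha_p}{ds}$, $\frac{d\theta_p}{ds}$ and the Doppler-evolution corrections are genuinely dominated by the phase term over the displacement scales of interest — the phase derivative carries the large factor $2\pi/\lambda_l$, while the envelope and angle variations scale like the inverse scatterer distance and array aperture and the Doppler corrections like $v_u/c$. I would therefore state these as explicit modelling assumptions (far field, narrowband, slowly-moving UE) and bound the neglected contributions by $O(\lambda/d_{\min})$ and $O(v_u/c)$. A secondary subtlety is the on-grid assumption in the angular-delay domain: off-grid paths leak across neighbouring bins, but the leakage pattern is itself a fixed function of the true AoA, hence of $\mathbf{x}_u$ and so of $\mathbf{H}_u$, so the abstract argument of the first step still delivers the conclusion even where the clean element-wise form degrades.
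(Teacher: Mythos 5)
Your proposal is correct, and its logical core coincides with the paper's: the chain rule through the path lengths $d_p$ (with the geometric fact that $\partial d_p/\partial\mathbf{m}$ is a cosine of the angle between the motion direction and the AoA, your sign differing only by convention), plus the bidirectional mapping $\mathbf{H}_u \leftrightarrow \mathbf{x}_u$ to convert ``function of position'' into ``function of CSI''. Where you diverge: your first, abstract step ($\mathbf{H}_u=\Psi(\mathbf{x}_u)$, directional derivative $\langle\nabla\Psi(\Psi^{-1}(\mathbf{H}_u)),\hat{\mathbf{v}}(\theta_v)\rangle$) already proves the proposition by itself and is cleaner and more general than the paper's route, which instead differentiates the static per-antenna, per-subcarrier expressions from \eqref{Sye_CFR_eq1} explicitly, writing $\alpha_p=\xi_p/d_p$ and keeping \emph{both} the amplitude-decay term ($\propto 1/d_p$) and the phase-rotation term before invoking the bijection. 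Your second, concrete step differs in two ways: you differentiate the mobile CFR \eqref{Sye_CFR_eq2} and argue the $v_u$-dependence is negligible to $O(v_u/c)$, whereas the paper simply proves the statement for the static channel (which is what its ODE actually models, Doppler being compensated separately) and thus needs no such approximation — if you want an exact statement rather than a leading-order one, the paper's framing is the safer reading of the loosely worded ``mobile channel''. Second, you additionally discard $d\alpha_p/ds$ to obtain the pure phase-rotation law $dg_{z,q}/ds\approx (j2\pi/\lambda_c)\cos(\theta_v-\widetilde{\theta}_z)g_{z,q}$; this is physically defensible in the far field ($1/d_p\ll 2\pi/\lambda$) but is unnecessary for the proposition — the $1/d_p$ term is itself a function of position, hence of $\mathbf{H}_u$ under the bijection — and the paper deliberately retains it, since its equations \eqref{real}--\eqref{imag} and the ensuing SCGnet design are built around learning exactly that $1/d_p$ map. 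So your approach buys a shorter and more general proof of the stated claim (and an honest treatment of the Doppler term and of off-grid leakage), while the paper's explicit two-term computation buys the structural formula that the rest of the scheme exploits.
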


\begin{proof}[\textbf{Proof}]
According to \eqref{Sye_CFR_eq1}, the real and imaginary parts of channel response of the $u$th subcarrier and $v$th antenna can be respectively written as
	\begin{equation}
		\begin{aligned}
			h_{u,v}^{\rm{real}} = \sum\limits_{p = 1}^{{N_{\rm{path}}}} {{\alpha _p}} \cos ( - 2\pi {{{d_p}} \over {{\lambda _u}}} - \pi (v - 1)\cos {\theta _p}),
		\end{aligned}
	\end{equation}
	and 
	\begin{equation}
		\begin{aligned}
		h_{u,v}^{\rm{imag}} = \sum\limits_{p = 1}^{{N_{\rm{path}}}} {{\alpha _p}} \sin ( - 2\pi {{{d_p}} \over {{\lambda _u}}} - \pi (v - 1)\cos {\theta _p}),
		\end{aligned}
	\end{equation}
	where ${N_{\rm{path}}}$ is the total amounts of propagation paths, ${d_p}$ is the propagation length of path $p$, ${\alpha _p}$ is the path loss and ${\lambda _u}$ is the wavelength of the $u$th subcarrier. In particular, ${\alpha _p}$ is an inverse proportional function of ${d_p}$. Thus, we denote ${\alpha _p} = {{{\xi _p}} \over {{d_p}}}$, where ${\xi _p}$ is only related to the characteristics of electromagnetic materials. Also, we denote $-2\pi {{{d_p}} \over {{\lambda _u}}} = \rho {d_p}$ and $ - \pi (v - 1)\cos {\theta _p} = b$. Taking the real part of channel response for illustration, its derivative with respect to spatial displacement is
	\begin{equation}
		\begin{aligned}
			&{{\partial h_{u,v}^{\rm{real}}} \over {\partial \bf{m}}} = {{\partial h_{u,v}^{\rm{real}}} \over {\partial {d_p}}}{{\partial {d_p}} \over {\partial \bf{m}}}\\& = [\sum\limits_{p = 1}^{N_p} {{{{\xi _p}} \over {{d_p}}}} \sin ( \rho {d_p} + b)\rho - \sum\limits_{p = 1}^{N_p} {{{{\xi _p}} \over {{d_p}}}} \cos (\rho {d_p} + b){1 \over {{d_p}}}]{{\partial {d_p}} \over {\partial \bf{m}}} \\& = \sum\limits_{p = 1}^{N_p} {{{\xi _p}} \over {{d_p}}} (\sin ( \rho {d_p} + b)\rho - \cos (\rho {d_p} + b){1 \over {{d_p}}}){{\partial {d_p}} \over {\partial \bf{m}}},
		\end{aligned}
	\end{equation}
	\begin{equation}
	{{\partial {d_p}} \over {\partial {\bf{m}}}} = \cos ({\theta _m} - {\theta _p}),
	\end{equation}
	where ${\theta _m}$ denotes the motion direction.
	
	Similarly, the derivative of the imaginary part can be given by 
	\begin{equation}
		{{\partial h_{u,v}^{\rm{imag}}} \over {\partial \bf{m}}} =  - \sum\limits_{p = 1}^{N_p} {{{\xi _p}} \over {{d_p}}} (\cos ( \rho {d_p} + b)\rho + \sin (\rho {d_p} + b){1 \over {{d_p}}}){{\partial {d_p}} \over {\partial \bf{m}}}.
	\end{equation}
	It is obvious that all the variables in the above equations, such as ${d_p}$, ${\xi _p}$ and ${\rho}$ are decided by the UE's position ${{\bf{x}}_{\bf{u}}}$ with the quasi-static scattering environment. Besides, the position is determined by the UE's CSI ${{\bf{H}}_u}$, based on the bidirectional mapping assumption of $
		\{ {{\bf{H}}_u}\}  \to \{ {{\bf{x}}_{\bf{u}}}\}$
		. Thus, the whole formula only has two variable $ {{\bf{H}}_u}$ and ${\theta _m}$. Therefore, this proposition is proved.
\end{proof}
It should be noted that, in practical wireless communication environments, the bidirectional mapping between ${{\bf{H}}_u}$ and ${\bf{x}}_{\bf{u}}$ holds with high probability due to the rich scattering and the large-scale antenna arrays deployed therein \cite{8292280}. Consequently, with the above proposition, it is reasonable that we use an ordinary differential equation to represent the changing process of the wireless channel with respect to space
\begin{equation}
{{\partial {\bf{H}}} \over {\partial {\bf{m}}}} = f({\bf{H}},{\theta _m}).
\end{equation}
Thus, the under-predicted channel can be calculated by forward integration, and any known static channel at a certain position obtained in a historical communication process can be used as the initial value 
\begin{equation}
{{\bf{H}}_{\rm{predict}}} = {{\bf{H}}_0} + \int_0^s {f({\bf{H}},{\theta _m})} d{\bf{m}},
\end{equation}
where ${{\bf{H}}_{\rm{predict}}}$ is the static channel to be predicted, ${{\bf{H}}_0}$ is the known channel and $s$ is the distance between the two points. Since the function $f( \cdot )$ is not explicitly known, a neural network is needed to represent it implicitly. Further, the complete integration process can be represented by a neural ODE.

\section{The Physics-Inspired Mobile Channel Prediction Scheme} \label{chap:scheme}

\subsection{The Overall Process of Mobile Channel Prediction based on Neural ODE}
As shown in Fig. \ref{process}, the whole mobile channel prediction system consists of three parts. We first obtain the UE's position and velocity vector through the iteration algorithm, which will be introduced in the following text. Then, we feed the position coordination to the Neural ODE structure trained with the historical static channel. Finally, the mobile channel is obtained through Doppler compensation with a known velocity vector. In principle, the Neural ODE module is used to learn the implicit information related to the scattering environment and further generate CSI. The other parts are used to learn the implicit information related to the mobile UE. Compared with those purely data-driven schemes, we design networks separately for different functions required in the whole process, which reduces the difficulty of network learning and the dependence on data volume.

\begin{figure}[htb!]
	\centering
	\includegraphics[width=0.4\textwidth]{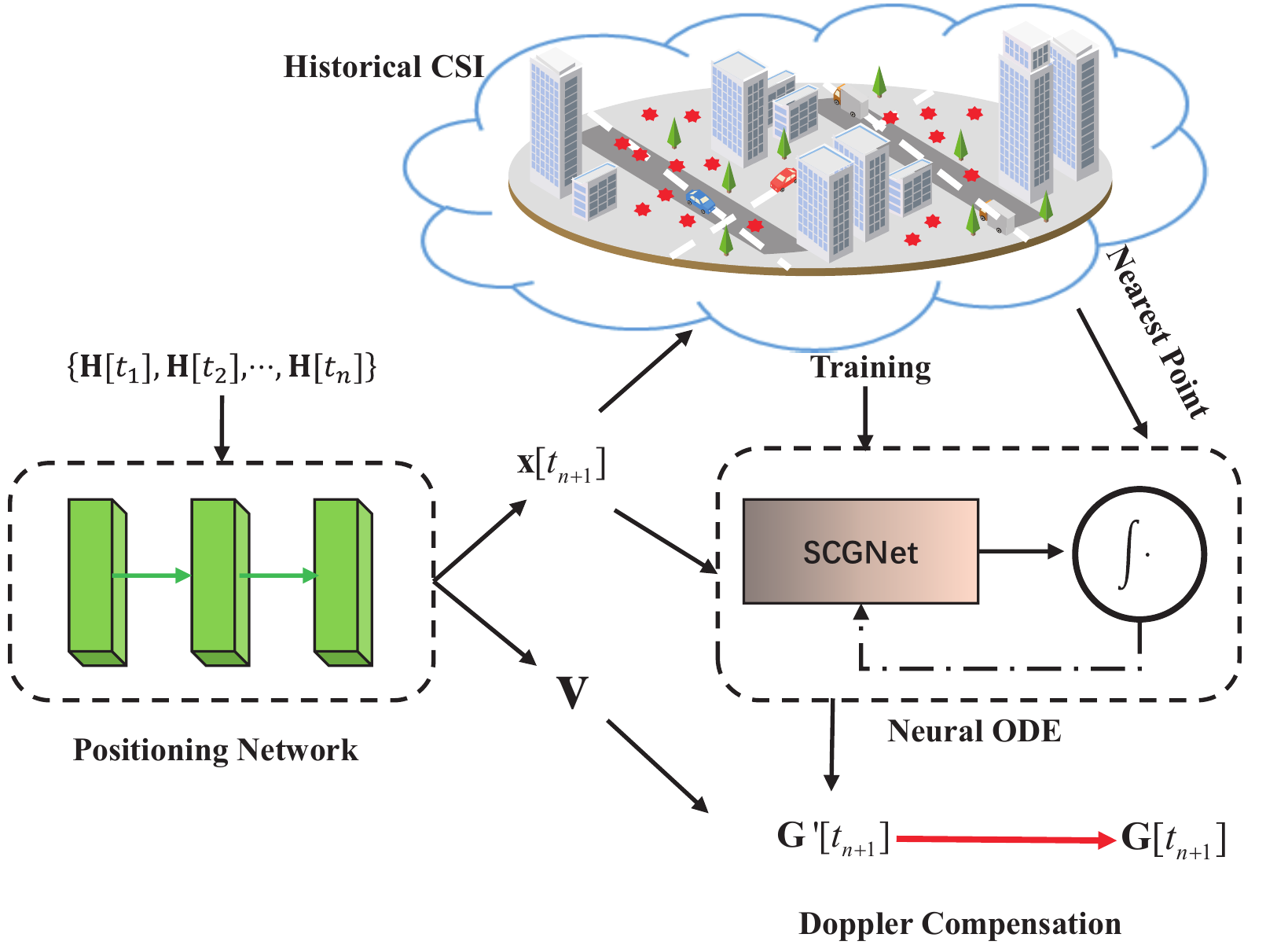}
	\caption{The whole process of mobile channel prediction based on Neural ODE.}
	\label{process}
\end{figure}
\subsection{The Learning Structure of Spatial Channel Gradient Network}
For each matrix element with $\theta $ AOA and $\tau$ delay in the angular-delay domain channel, the real and imaginary parts can be written as
\begin{equation}
	g_{\tau ,\theta }^{\rm{real}} = {\alpha _{\tau ,\theta }}\sum\limits_{k = 1}^{{N_c}}  \cos ( - 2\pi {{{d_p }} \over {{\lambda _k}}}).
\end{equation}
\begin{equation}
	g_{\tau ,\theta }^{\rm{imag}} = {\alpha _{\tau ,\theta }}\sum\limits_{k = 1}^{{N_c}}  \sin ( - 2\pi {{{d_p }} \over {{\lambda _k}}}).
\end{equation}
Here, $d_p$ is indeed decided by $\tau$ and $\theta$. As mentioned above, ${\alpha _{\tau ,\theta }}$ is an inverse proportional function of ${d_p}$. We denote ${\alpha _{\tau ,\theta }} = {{{\xi _p}} \over {{d_p}}}$, where ${\xi _p}$ is only related to the characteristics of electromagnetic materials. Also, we denote $-2\pi {{{{d_p}} \over {{\lambda _k}}}} = \rho_k {d_p}$. Since in a practical system, wavelengths of different subcarriers can be considered approximately equal. So, we have ${\rho _k}{d_p} \approx \rho {d_p}$. Then, the gradient of the channel relative to space can be written as
\begin{equation}\label{real}
\begin{aligned}
		&{{\partial g_{\tau,\theta}^{\rm{real}}} \over {\partial \bf{m}}} = {{\partial g_{\tau,\theta}^{\rm{real}}} \over {\partial {d_p}}}{{\partial {d_p}} \over {\partial \bf{m}}}\\& = [\sum\limits_{k = 1}^{N_c} {{{{\xi _p}} \over {{d_p}}}} \sin ( \rho {d_p} )\rho - \sum\limits_{k = 1}^{N_c} {{{{\xi _p}} \over {{d_p}}}} \cos (\rho {d_p} ){1 \over {{d_p}}}]{{\partial {d_p}} \over {\partial \bf{m}}} \\& = ( - {1 \over {{d_p}}}g_{\tau ,\theta }^{\rm{real}} + \rho g_{\tau ,\theta }^{\rm{imag}}){{\partial {d_p}} \over {\partial {\bf{m}}}},
\end{aligned}
\end{equation}
\begin{equation}\label{imag}
	\begin{aligned}
		&{{\partial g_{\tau,\theta}^{\rm{imag}}} \over {\partial \bf{m}}} = {{\partial g_{\tau,\theta}^{\rm{imag}}} \over {\partial {d_p}}}{{\partial {d_p}} \over {\partial \bf{m}}}\\& = [-\sum\limits_{k = 1}^{N_c} {{{{\xi _p}} \over {{d_p}}}} \cos ( \rho {d_p} )\rho - \sum\limits_{k = 1}^{N_c} {{{{\xi _p}} \over {{d_p}}}} \sin (\rho {d_p} ){1 \over {{d_p}}}]{{\partial {d_p}} \over {\partial \bf{m}}} \\& = ( - {1 \over {{d_p}}}g_{\tau ,\theta }^{\rm{real}} - \rho g_{\tau ,\theta }^{\rm{imag}}){{\partial {d_p}} \over {\partial {\bf{m}}}}.
	\end{aligned}
\end{equation}

We take \eqref{real} as an example to demonstrate our ideas behind the network design. After we transfer the CSI in the angular-delay domain and calculate its gradient of space in such a way, it can be found that such a mathematical expression has some clear natural advantages for us to design the representing networks. First of all, in existing works, the original complex channel matrix must be split into real and imaginary parts. Then, we need to concatenate them together due to the constraint that the neural network can only handle real numbers. However, such an operation significantly destroys the internal correlation of input data. On the contrary, in \eqref{real} and \eqref{imag}, the expression on the right side is directly composed of the weighted sum of the real and the imaginary parts, respectively. Therefore, dividing the original complex number into real and imaginary parts does not destroy the internal correlation of data but also facilitates network learning. Secondly, as mentioned above, the original channel is high-dimensional complex data. However, in the equation, only ${1 \over {{d_p}}}$, which denotes the reciprocal of propagation path lengths, needs to be learned. Unlike existing works whose networks need to recover high-dimensional complex data (channel matrix) as the output, our network only needs to learn a simple mapping from high-dimensional complex data to low-dimensional simple data. Thus, the requirements for network parameters scale and complexity cost are significantly reduced. It is worth mentioning that the changes in the delay and angle of propagation of paths are negligible for a propagation path since the sampling interval is usually not long enough. Therefore, in the process of channel changing between two points, the indices of the matrix element representing a certain propagation path are fixed, ensuring our methods' robustness.

\begin{figure*}
	\centering
	\includegraphics[width=0.55\textwidth]{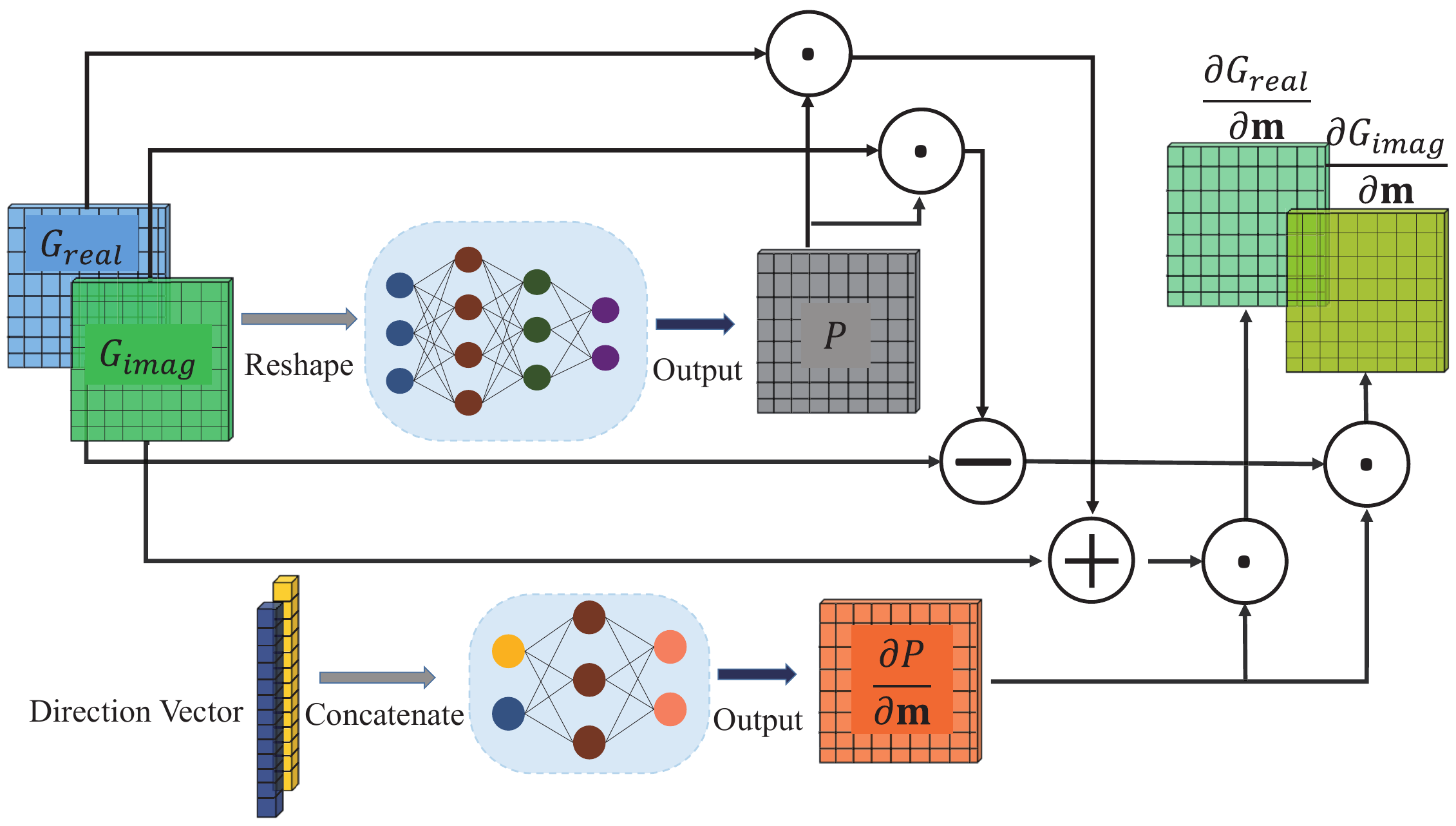}
	\vspace{-.5em}
	\caption{The network structure of physics-inspired SCGnet.}
	\vspace{-.5em}
	\label{ODE}
\end{figure*}

The overall network design is shown in Fig. \ref{ODE}. The whole structure is composed of two networks. The first network, called the scattering learning network, adopts the fully connected structure that takes the reshaped CSI matrix $\bm{G} \in {N_a} \times {N_c} \times 2$ as input, where ${N_a}$ denotes the number of antennas and ${N_c}$ denotes the number of subcarriers, and intends to obtain the matrix $P \in {N_a} \times {N_c}$ composed of propagation length $ - {1 \over {{d_p}}}$. The other network, called the direction embedding network, takes the direction vector ${\theta _{direction}} \sim [ - \pi ,\pi ]$ as input to obtain the matrix ${{\partial P} \over {\partial {\bf{m}}}} \in {N_a} \times {N_c}$ composed of ${{\partial {d_p}} \over {\partial {\bf{m}}}}$ as we have proved that ${{\partial {d_p}} \over {\partial {\bf{m}}}}$ is only decided by the integration direction. Rather than using the angle value as input, we use the sine and cosine values. It can also be viewed as a special case of position encoding, which is commonly used in state-of-the-art such as the transformer. Moreover, since the output of this network, which represents the gradient of channel relative to spatial displacement, is a vector of dimension ${N_a} \times {N_c} \times 2$, we extend the input dimension of ${N_a} \times 2$ by simple duplicating to balance the dimension of input and output for better learning efficiency. Then, the final output can be directly obtained through simple addition and subtraction with two times Hadamard production. Since the forward propagation strictly obeys the calculation process of expression (\ref{real}) - (\ref{imag}), we achieve fusing the channel prior to the network design.

\begin{algorithm}
	\small 
	\DontPrintSemicolon
	\SetAlgoLined
	\KwIn {Position coordinates ${{\bf{x}}_u}$ of UE to be predicted.\\
	}
	\KwOut {predicted static channel ${\bf{G}}({{\bf{x}}_u})$}
	Train the SCGnet with the historical channel dataset.\;
	Search for the nearest point in the database. Get the Position coordinates ${\bf{x}}'$ and channel in the angular-delay domain ${\bf{G}}({\bf{x}}')$ of the nearest point in the database. Get the interval $s$.\;
	Obtain the integration direction ${\theta _m}$ with ${\bf{x}}'$ and ${{\bf{x}}_u}$.\;
	$s' \gets 0$.\;
	\While{$s' \le s$}{
		Input ${\bf{G}}({\bf{x}}')$ and ${\theta _m}$ to the SCGnet and get the output ${{\partial {\bf{G}}({\bf{x}}')} \over {\partial {\bf{m}}}}$.\;
		${\bf{G}}({\bf{x}}') \gets {\bf{G}}({\bf{x}}') + {{\partial {\bf{G}}({\bf{x}}')} \over {\partial {\bf{m}}}} \times \Delta s$.\;
		$s' \gets s'+\Delta s$.\;
	}
	\caption{\textbf{Static Channel Prediction with Neural ODE (Euler method)}}
	\label{algorithm1}
\end{algorithm}
The overall training and inferring process of the proposed SCGnet are based on the Neural ODE scheme. In the training stage, the method for building a training dataset contains the following steps. Firstly, for each recorded CSI sample, find its $z$ nearest neighbor points and build $z$ data pairs. Then, calculate the vector's length and angle with the selected sample's position as the starting point and its neighbor as the ending point. Thus, the CSI of the selected CSI sample and the vector's angle are set as the input of the SCGnet. The length of the vector is set as the integral length, and the CSI of its neighbor points as the targeted output of the Neural ODE. It is worth mentioning that $z$ is selected according to the sampling density of historical data. More specifically, the value of $z$ needs to be smaller when the sampling density is relatively low since a larger integral will lead to greater error and influence the convergence of learning. As shown in Fig. \ref{integration}, in the inferring stage, we find the point closest to the position of the channel to be predicted in the historical database and then set its value as the initial value. Then, the channel to be predicted can be obtained by forward integration. Supposing that the ODE Solver adopts the Euler method, the whole process of static channel prediction is summarized in Algorithm \ref{algorithm1}. 

\begin{figure}[b!]
	\centering
	\includegraphics[width=0.4\textwidth]{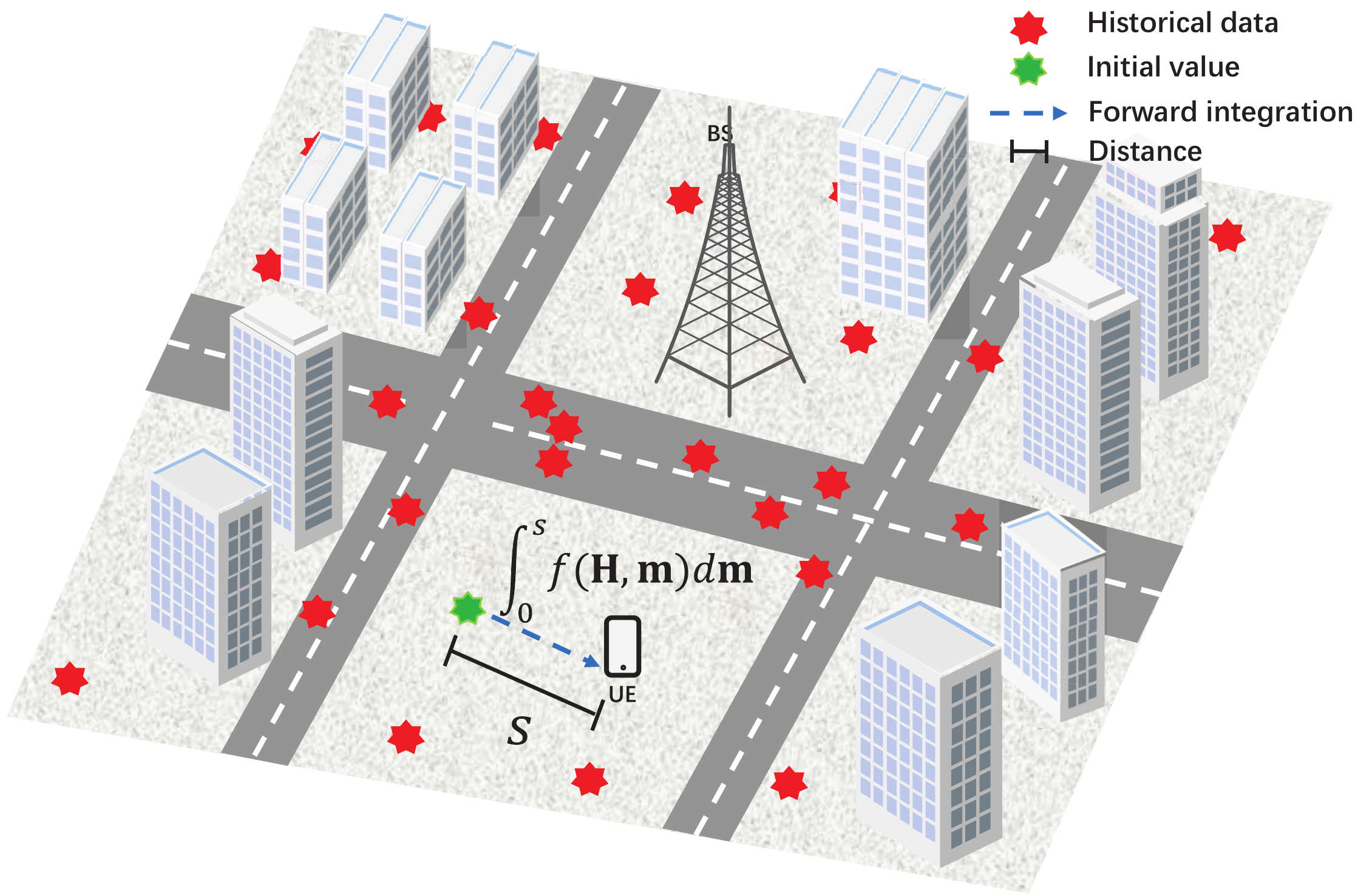}
	\vspace{-.5em}
	\caption{The process for obtaining static channel with forward integration through Neural ODE.}
	\vspace{-.5em}
	\label{integration}
\end{figure}

\subsection{Algorithm for Elimination and Compensation of the Doppler Shift with Angular-Delay Domain Channel}

Doppler shift compensation is the key step for obtaining the target mobile CSI with the static channel predicted by Neural ODE. Doppler compensation in existing works relies on AOA separation, which is quite complex and requires a lot of computational costs. However, since we already know the velocity vector, the Doppler compensation could be much easier, especially with the angular-delay domain channel.

As shown in Fig. \ref{Doppler}, the Doppler phase shift is only related to the velocity vector of the moving UE and the angle of arrival of the certain propagation path. Conveniently, as we have mentioned above, each element in the angular-delay domain channel matrix can be viewed approximately as an equivalent propagation path response ${g_{z ,q }}$ with known AOA and time delay shown in Equation \eqref{eq1}, \eqref{eq2}. Thus, the channel with Doppler shift ${\bf{G}}_{mobile}$ can be calculated
\begin{equation}
{{\bf{G}}_{mobile}} = {\bf{G}} \circ {\bf{D}},
\end{equation}
where $ \circ $ denotes the Hadamard product and ${\bf{D}} \in {N_t} \times {N_c}$ is the Doppler compensation matrix 
\begin{equation}
	{\bf{D}}(z,q){ = ^{ - j2\pi {{{v_u}} \over \lambda }\cos ({\theta _v} - {\theta _z} + \varphi ){\tau _q}}}.
\end{equation}
Dually, for Doppler elimination, we have
\begin{equation}
{\bf{G}} = {{\bf{G}}_{mobile}} \circ {\bf{E}},
\end{equation}
where ${\bf{E}} \in {N_t} \times {N_c}$ is the Doppler elimination matrix 
\begin{equation}
{\bf{E}}(z,q) = {e^{j2\pi {{{v_u}} \over \lambda }\cos ({\theta _v} - {\theta _z} + \varphi ){\tau _q}}},
\end{equation}

\subsection{The Iteration Algorithm for Sequential Motion Information Obtaining and Positioning}

In order to obtain the position and velocity of the UE so as to further obtain the CSI to be predicted, we propose a novel learning-based positioning network and an iterative motion extraction method. The overall target is to make use of the channel sequence estimated in a past time period and predict the position of the UE in the next time slot with its velocity vector. 
\begin{figure}
	\centering
	\includegraphics[width=0.35\textwidth]{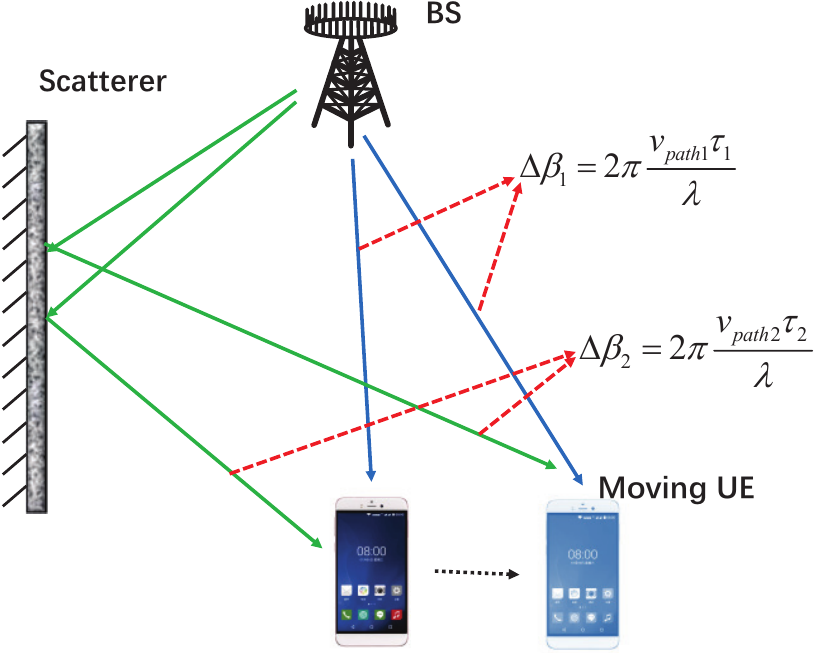}
	\vspace{-.5em}
	\caption{Schematic diagram of Doppler phase shift.}
	\vspace{-.5em}
	\label{Doppler}
\end{figure}

\begin{figure}
	\centering
	\includegraphics[width=0.45\textwidth]{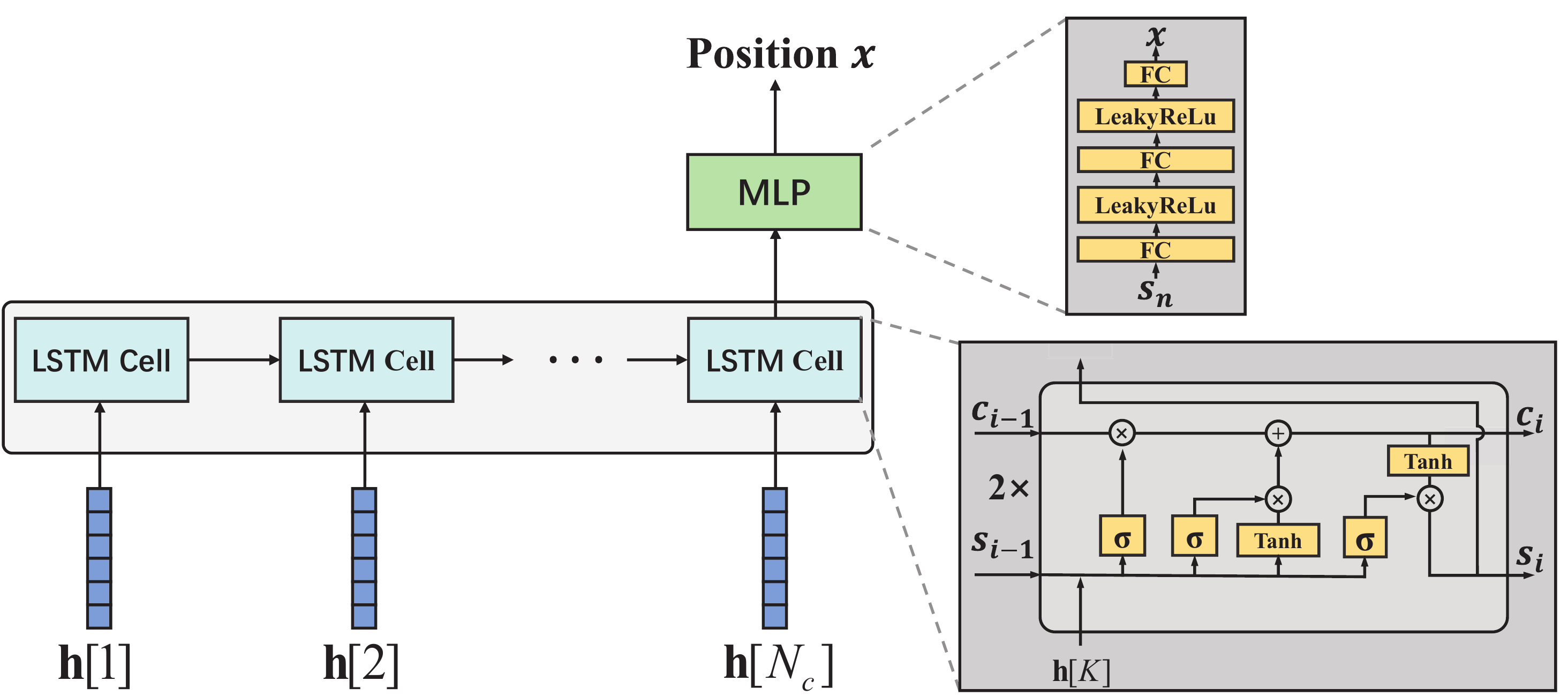}
	\vspace{-.5em}
	\caption{The LSTM structure for channel positioning network.}
	\vspace{-.5em}
	\label{LSTM}
\end{figure}

Positioning through a learning-based method is essentially a feature extraction process, and existing works widely adopt convolution structures \cite{cnn_space_frequency,cnn_angle_delay1,cnn_angle_delay2,ad_cnn3}. However, the positioning accuracy of those works is highly limited. The main reason is that there are several essential differences between CSI matrix elements and pixels in an image, which cause mismatches between the CSI matrix and convolution algorithm. First of all, it is the relationship between pixels rather than the coordinates of the pixels that really contain the critical information of images. This property makes the convolution and pooling operators fit the image data well due to the translational invariance. In contrast, the position index in the CSI matrix has clear correspondence with the index of antennas and carriers and is neither swappable nor translational. Thus, lots of crucial information losses when it flows through the convolutional neural network. On the other hand, images have a high smoothness characteristic since the values of adjacent pixels are usually similar. In contrast, the adjacent elements of the CSI matrix are generally quite numerically different. Such high frequency and unsmooth features make the convolution operator’s inherent smoothing effect which is suitable for images, however, become the bottleneck of the MIMO channel data representation.

\begin{figure}
	\centering
	\includegraphics[width=0.43\textwidth]{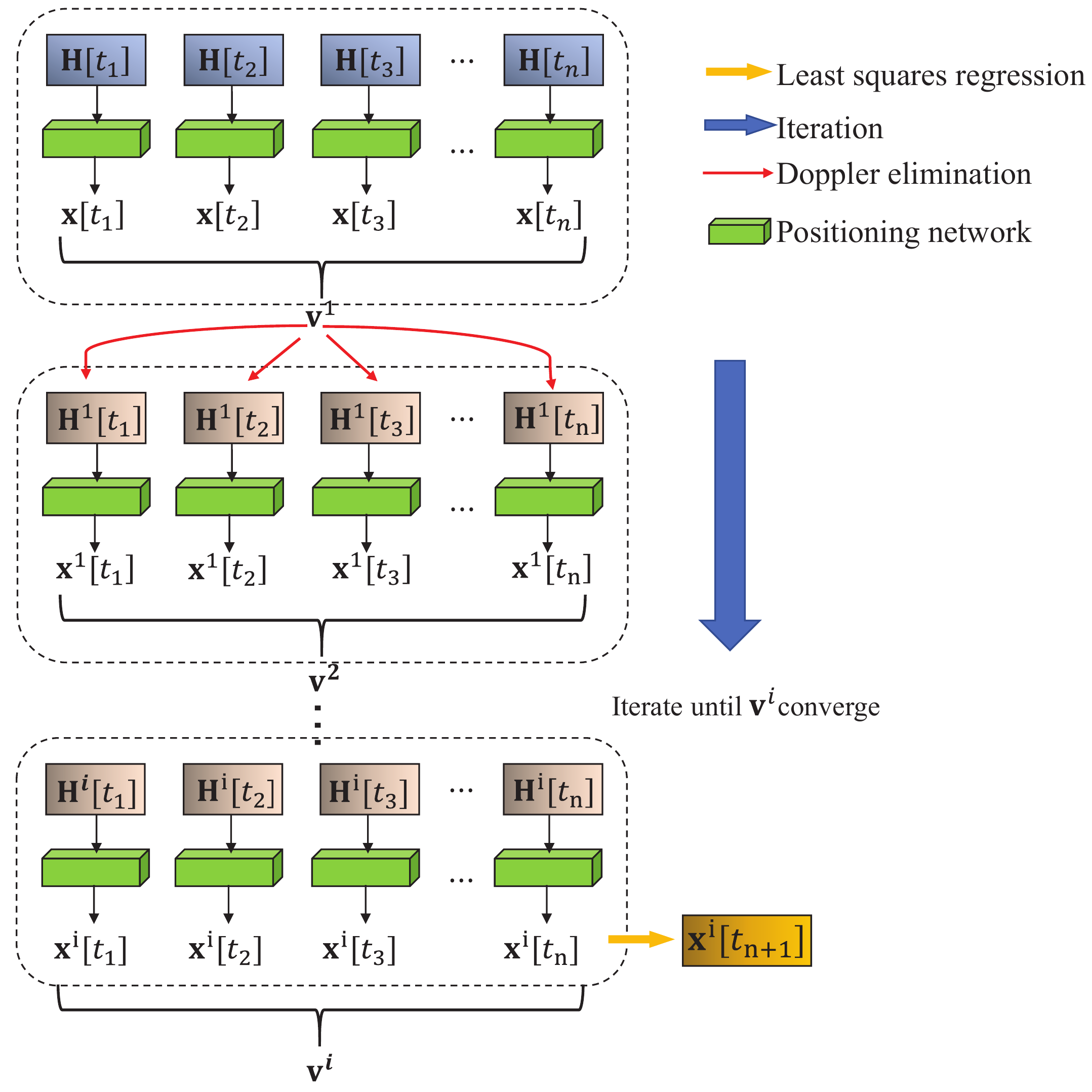}
	\vspace{-.5em}
	\caption{The iteration process for motion vector obtaining and positioning.}
	\vspace{-.5em}
	\label{iteration}
\end{figure}
As proven in \cite{10061451}, considering that the interval of adjacent subcarriers is strictly equal, the MIMO channel frequency response matrix has some uniform sequential characteristics in the spatial domain. Thus, the CSI can be viewed as an information sequence, and the sequential learning structure is more suitable for feature extraction. As shown in Fig. \ref{LSTM}, the network is composed of $N_c$ standard LSTM cells. The channel response vector of different sub-carriers are inputted into those cells respectively, and the targeted position is outputted in the last cell. Since it is not the whole channel matrix but one single column inputted to each cell, the parameters scale for each cell can be highly limited and ensure network robustness. 

Since the positioning network is trained with historic static channel data at the training state while the input is the measured channel under movement at the inferring stage, the output result will carry some inherent errors. So, we propose an iterative method to eliminate the Doppler effect in the estimated mobile channel sequence. Practically, the channel sampling interval is in the time scale of milliseconds, which is negligible compared with the time scale of user speed change. So, we can approximately assume that the user is moving uniformly in a straight line during this period of time. This assumption provides strong prior knowledge and further clarifies the iteration goal for the algorithm. Algorithm \ref{algorithm2} describes the detailed flow, and the key designing ideas can be summarized as follows. With the channel sequence $\{ {\bf{H}}[{t_1}],{\bf{H}}[{t_2}], \cdot  \cdot  \cdot ,{\bf{H}}[{t_n}]\} $, we can obtain a series of UE's position $\{ {\bf{x}}[{t_1}],{\bf{x}}[{t_n}], \cdot  \cdot  \cdot ,{\bf{x}}[{t_3}]\} $ through the channel positioning network. Since all the time points $\{ {t_1},{t_1}, \cdot  \cdot  \cdot ,{t_n}\} $ are known, the least squares regression algorithm can be adopted to find a linear equation to fit the sequence 
\begin{equation}
	{\bf{x}} = {{\bf{v}}^i}t + {\bf{\sigma }},
\end{equation}
where ${{\bf{v}}^i}$ is the velocity vector calculated at the $i$th iteration and ${\bf{\sigma }}$ is a constant vector.
Then, the Doppler elimination algorithm introduced above can be used to obtain the targeted static channel $\{ {{\bf{H}}^i}[{t_1}],{{\bf{H}}^i}[{t_2}], \cdot  \cdot  \cdot ,{{\bf{H}}^i}[{t_n}]\} $. Furthur, predicted position $\{ {{\bf{x}}^i}[{t_1}],{{\bf{x}}^i}[{t_2}], \cdot  \cdot  \cdot ,{{\bf{x}}^i}[{t_3}]\} $ as well as the velocity vector ${{\bf{v}}^i}$ can be obtained. Repeat the above process until $|{{\bf{v}}^{i + 1}} - {{\bf{v}}^i}| < \eta $ and finally the position ${{\bf{x}}^{i + 1}}[{t_{n + 1}}]$ to be predicted can be obtained. It should be noticed that since the positioning accuracy of the positioning network is relatively high, even with a mobile channel as input, the convergence of the iteration algorithm can be highly ensured.

\begin{algorithm}
	\small 
	\DontPrintSemicolon
	\SetAlgoLined
	\KwIn {Channel sequence obtained in a past period of time $\{ {\bf{H}}[{t_1}],{\bf{H}}[{t_2}], \cdot  \cdot  \cdot ,{\bf{H}}[{t_n}]\} $; The channel sampling time point $\{ {t_1},{t_2}, \cdot  \cdot  \cdot ,{t_n}\} $.
	}
	\KwOut {The UE's position in the next time slot ${{\bf{x}}^i}[{t_{n + 1}}]$; The UE's velocity vector ${{\bf{v}}^i}$.}
	Train positioning network with historical static CSI data\;
	Input $\{ {\bf{H}}[{t_1}],{\bf{H}}[{t_2}], \cdot  \cdot  \cdot ,{\bf{H}}[{t_n}]\} $ to the positioning network separately and get $\{ {\bf{x}}[{t_1}],{\bf{x}}[{t_2}], \cdot  \cdot  \cdot ,{\bf{x}}[{t_n}]\} $.\;
	Using the least squares regression algorithm to find the equation ${\bf{x}} = {\bf{v}}t + {\bf{\sigma }}$.\;
	${{\bf{v}}^0} \gets {\bf{v}}$; $|{{\bf{v}}^1}| \gets 0$; $i \gets 1$
	
	\While{$|{{\bf{v}}^i} - {{\bf{v}}^{i - 1}}| \ge \eta $}{
	Obtain $\{ {{\bf{H}}^i}[{t_1}],{{\bf{H}}^i}[{t_2}], \cdot  \cdot  \cdot ,{{\bf{H}}^i}[{t_n}]\} $ through Doppler elimination with ${{\bf{v}}^{i - 1}}$ and $\{ {{\bf{H}}^{i - 1}}[{t_1}],{{\bf{H}}^{i - 1}}[{t_2}], \cdot  \cdot  \cdot ,{{\bf{H}}^{i - 1}}[{t_n}]\} $.\;
	Input $\{ {{\bf{H}}^i}[{t_1}],{{\bf{H}}^i}[{t_2}], \cdot  \cdot  \cdot ,{{\bf{H}}^i}[{t_n}]\} $ to the positioning network separately and get $\{ {{\bf{x}}^i}[{t_1}],{{\bf{x}}^i}[{t_2}], \cdot  \cdot  \cdot ,{{\bf{x}}^i}[{t_n}]\} $.\;
	Using the least squares regression algorithm to find the equation ${{\bf{x}}^i} = {{\bf{v}}^i}t + {\bf{\sigma }}$.\;
	${i \gets i+1}$.\; 
	}
	Substitute ${t_{n + 1}}$ into the equation ${{\bf{x}}^i} = {{\bf{v}}^i}t + {\bf{\sigma }}$ to get ${\bf{x}}[{t_{n + 1}}]$.
	\caption{\textbf{Positioning and Motion Information Extraction}}
	\label{algorithm2}
\end{algorithm}

\section{Experimental Results and Analysis} \label{chap:results}
\subsection{Scene Setup and Datasets Generation} \label{scene}

As has been proven to be one of the most realistic deterministic channel models, the ray-tracing model is used to generate our dataset. Ray-tracing is a method for approximating the propagation of a wave in an environment using discrete rays \cite{686774,4685913,330158,8438326}. The discrete rays are traced by determining all possible specular images of TX/RX or by launching rays in different directions. The possible pathways and their corresponding interactions, such as reflections, diffraction, and diffuse scattering, are determined in both cases. 

As shown in Fig. \ref{model}, A practical outdoor scenario is chosen for setting up our 3D model. Here, we use Wireless Insight software from Remcom company to do the ray-tracing calculation. In Fig. \ref{model}, the height of Building 1 is 25m, Building 2 is 35m, Building 3, and Building 4 is 8m. The material for all the buildings is set as cement. Area 5 is a forest. Users are distributed in a 120m × 60m area. The central frequency is set as 3.5GHz. The BS is equipped with a ULA and is located 10m higher above building 2. The OFDM bandwidth is 100MHz, and the maximum number of paths is 25. Due to the fact that the channel sampling time interval is relatively small in a practical system, it can be assumed that the UE moves in a uniform linear way in any direction within the sequence time. Moreover, we calculate and apply the corresponding Doppler phase shift to each propagation path. 

\begin{figure}[htb!]
	\centering
	\includegraphics[width=0.35\textwidth]{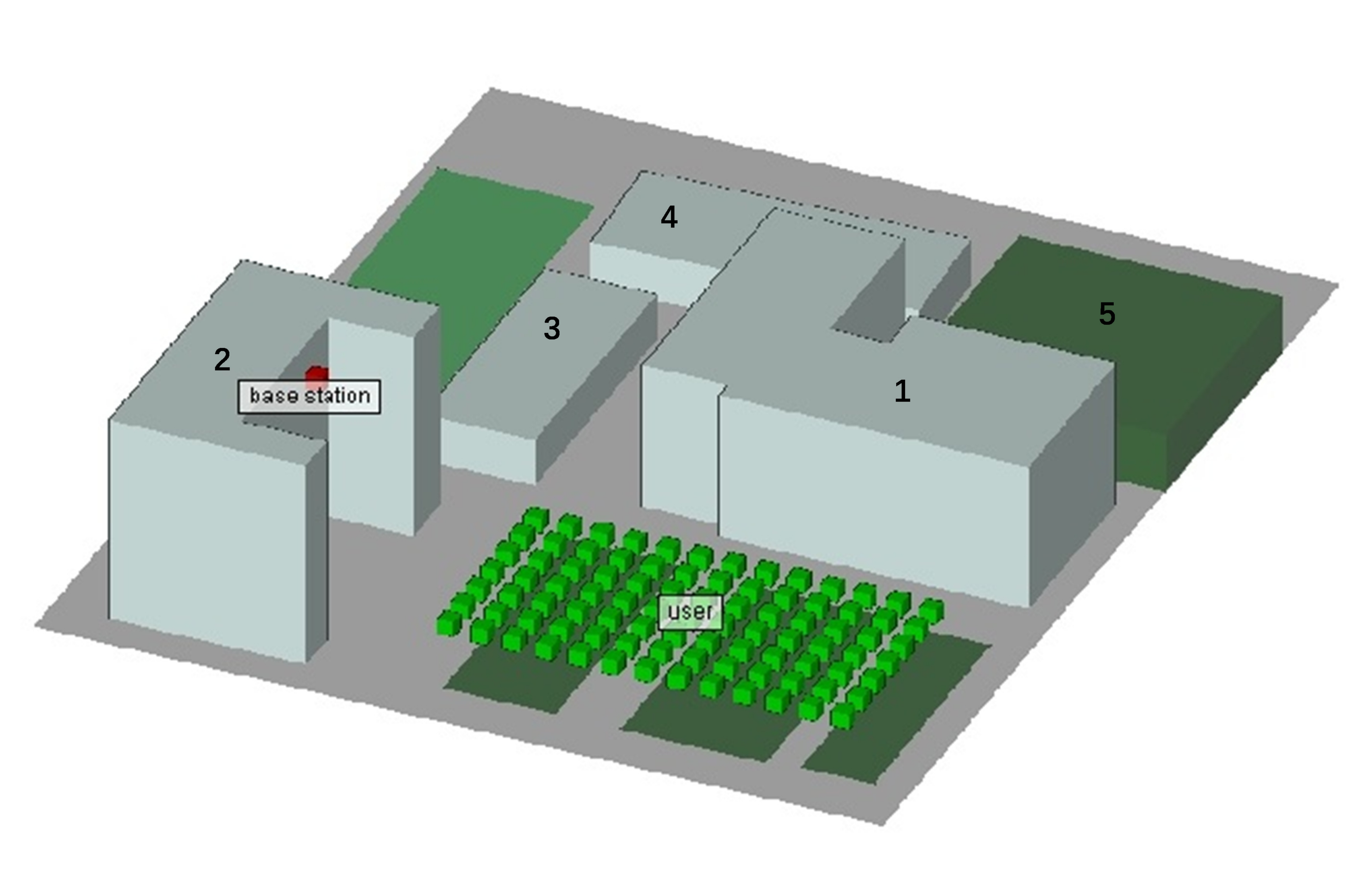}
	\vspace{-.5em}
	\caption{The 3D model of the Ray-tracing scene.}
	\label{model}
	\vspace{-.5em}
	%\vspace{-1cm}
\end{figure}

\subsection{Comparison Benchmarks} \label{Comparison}
Two state-of-the-art methods are selected as comparison benchmarks for comprehensively evaluating our proposed scheme. The performance measures include prediction accuracy, dataset size, and dependence on training data. 

First, we compare our scheme with the static channel database method in \cite{8968729}. It is convenient to directly compare our proposed method with it since they share the same historical static CSI database setting. In the comparing method, the author proposes that since the BS and main scatterers are static, certain wireless channel parameters mainly affected by static scatterers can also be assumed static, e.g., path loss, shadow fading, DOA, the number of paths, and time delay of main paths. Besides that, the velocity of BS is a dynamic parameter that can be obtained through the Global Positioning System (GPS) or other speed testing tools. Multiple Signal Classification (MUSIC) algorithm is used to extract channel parameters, and clustering methods are used assuming that the points in a cluster share the same static parameters. And then, other parameters, such as the path loss, are obtained through channel estimation. Here, for a more intuitive comparison, we relax the requirement of the comparing method and replace all parameters that need to be estimated with the ground truth. Then, we use the nearest neighbor interpolation to replace the original clustering method. With all the settings above, the predicted channel is supposed to be more accurate than the original comparing method. 

Then, we compare our scheme with the LSTM network, which is widely used as the benchmark in existing works adopting learning-based channel prediction methods \cite{9415201,9148836,9002073}. The LSTM network requires sequential data for training, while our proposed method only requires CSI of discrete points. So, to better illustrate the advantages of our proposed method in training data scale, we propose comparing our methods with the LSTM network while keeping the number of training data pieces equal in both methods where a sequence of data is defined as one piece of data. So, assuming the length of the sampling sequence is $z$, the scale of the data size of the comparing method is $z$ times larger than our proposed method. Because storing large amounts of data puts forward higher requirements for the base stations in future wireless communication systems, we want to show that our proposed method can achieve higher prediction accuracy with less training data through this comparison.
\begin{table}
	\caption{Main Parameters and Values for SCGNet.}
	\subtable[Scattering learning network]{
		\begin{minipage}[t]{1\linewidth}
		\begin{tabular}{ p{4cm}   p{3.7cm}}
	\toprule
	\textbf{Parameters} & \textbf{Value} \\
	\toprule
	Input dimension & 64×64×2\\
	
	Output dimension & 64×64×2 \\
	
	Activation function & Tanh \\
	
	Number of neurons in hidden & 256-768-512-256 \\			
	layer & \\
	Performance metric & Mean square error (MSE) \\
	Optimizer & Adam \\
	Training steps & $2 \times {10^5}$ \\
	Batch size & 20 \\
	Training samples & $80\% $ of the whole datasets \\
	\toprule
\end{tabular}
		\end{minipage}

	}
	\subtable[Direction embedding network]{
				\begin{minipage}[t]{1\linewidth}
		\begin{tabular}{ p{4cm}   p{3.7cm}}
			\toprule
			\textbf{Parameters} & \textbf{Value} \\
			\toprule
			Input dimension & 64×2\\
			
			Output dimension & 64×64 \\
			
			Activation function & Tanh \\
			
			Number of neurons in hidden & 512-256 \\	
			layer & \\		
			Performance metric & Mean square error (MSE) \\
			
			Optimizer & Adam \\
			
			Training steps & $2 \times {10^5}$ \\
			
			Batch size & 20 \\
			
			Training samples & $80\% $ of the whole datasets \\
			\toprule
		\end{tabular}
			\end{minipage}
	}
	\label{SCGnet}
\end{table}

\subsection{Model Settings and Complexity Analysing} \label{Setting}
The detailed model settings of SCGnet are shown in Tabel \ref{SCGnet}. The number of neurons of each hidden layer is selected from trial and error. Since the scattering learning network needs to learn the implicit scattering information, we design a relatively deep network with 256-768-512-256 hidden numbers for higher learning ability. The input and output dimension is ${N_t} \times {N_c} \times 2$. The direction embedding network serves an easier function. Thus, we only use an MLP with two hidden layers to reduce the network size and save computing costs. We adopt a two-layer LSTM structure for the positioning network with hidden numbers of 256-128. Tabel \ref{Positioning Network} shows the detailed model settings. 

For the SCGnet, the total number of parameters can be calculated as:
$(64 \times 64 \times 2 \times 256 + 256 + 256 \times 768 + 768 + 768 \times 512 + 512 + 512 \times 256 + 256 +256 \times 64 \times 64 \times2 + 64 \times 64\times2) + (64 \times 2 \times 512 + 512 + 512 \times 256 + 256 + 256 \times 64 \times 64 + 64 \times 64)= 6175232$. For the positioning network, the total number of parameters can be calculated as $[(256+64\times64\times2)\times256+256 + (256+128)\times128+128]\times4 + 128\times3 +3= 8849283$. Thus, the overall number of parameters is $6175232+8849283 = 15024515$. For the compared LSTM network, which adopts the structure of one hidden layer with 384 neurons, the total number of parameters can be calculated as $(64\times64\times2 + 384)\times384 + 384 + 384\times64\times64\times2 = 16328192$. Thus, the model complexity of our proposed network is approximately the same as the benchmark, which guarantees fairness in comparison. 

Mean square error (MSE) is chosen as the loss function, which can be written as
\begin{equation}
	{\rm{MSE}} = {1 \over i}\sum\limits_{w = 1}^i {{{({y_w} - \mathop {{y_w}}\limits^ \wedge  )}^2}}  ,
\end{equation}
where $y$ is the training label, ${\mathop y\limits^ \wedge}$ is the prediction value and $i$ is the dimension of output. It represents the average distance between the target value and the prediction value of all the output dimensions.

Normalized MSE (NMSE), which is an expectation value calculated across the testing dataset, is used to evaluate the prediction accuracy of the testing datasets since it is more convenient for comparison crossing different datasets. It can be mathematically written as
\begin{equation}
	\text{NMSE} = \mathbb{E}\left({{\sum\limits_{w = 1}^i {|{y_w} - {{\mathop {{y_w}|}\limits^ \wedge  }^2}} } \over {\sum\limits_{w = 1}^i {|{y_w}{|^2}} }}\right).
\end{equation}

For learning tasks, the performance is usually highly affected by the training dataset size. As we have mentioned above, the training dataset is generated by making use of the history records stored in the BS. So, we want to achieve favorable prediction performance while spending as little storage capacity of the BS as possible. Here, sampling density, defined as the average number of data samples in a unit area ($1m^2$), is used to evaluate the size of the training datasets. Three data sampling densities of $25$, $50$, and $100$ are considered in our experiment setting, and all the data are generated through a random process. 
We choose these three sampling densities because the experiment results will help better illustrate some essential characteristics of the network. It is noteworthy that the sampling density is much lower compared with other works in the field of AI-aided wireless communication with ray-tracing datasets. For example, 100000 channel samples are used in \cite{8395149} for a $40m \times 60m$ area, 121000 channel samples are used in \cite{9048929} for a $10m \times 10m$ scenery, and 50000 samples are used in \cite{8645463} for a $20m \times 30m$ scenery.

\begin{table}
	\caption{Main Parameters and Values for Positioning Network}
	\begin{center}
		\begin{tabular}{ p{4cm}   p{3.7cm}}
			\toprule
			\textbf{Parameters} & \textbf{Value} \\
			\toprule
			Input dimension & 64×64×2\\
			
			Output dimension & 3 \\
			
			Activation function & Tanh \\
			
			Number of neurons in hidden layer & 256-128\\			
			
			Performance metric & Mean square error (MSE) \\
			
			Optimizer & Adam \\
			
			Training steps & $2 \times {10^5}$ \\
			
			Batch size & 20 \\
			
			Training samples & $80\% $ of the whole datasets \\
			\toprule
		\end{tabular}	
	\end{center}
	\label{Positioning Network}
\end{table}

\subsection{Experimental Results and Analysis}

\begin{figure}
	\begin{minipage}[t]{1\linewidth}
  \centering
%	\hspace{-0.5cm}
\includegraphics[width=0.8\textwidth]{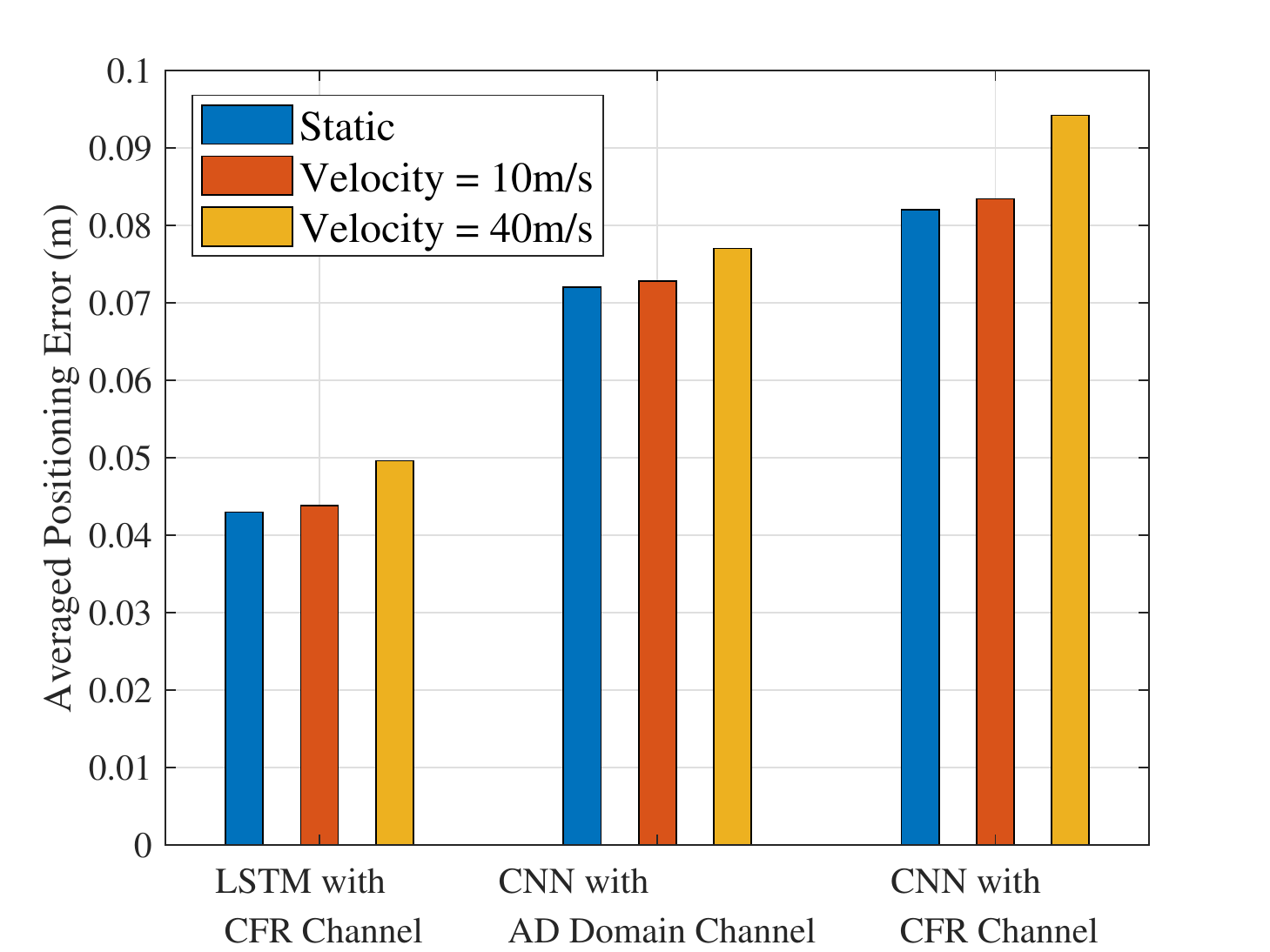}
\caption{Single point positioning accuracy of our proposed network and two comparing methods.}
\label{result1}
	\end{minipage}
	\begin{minipage}[t]{1\linewidth}
  \centering
%	\hspace{0.5cm}
\includegraphics[width=0.8\textwidth]{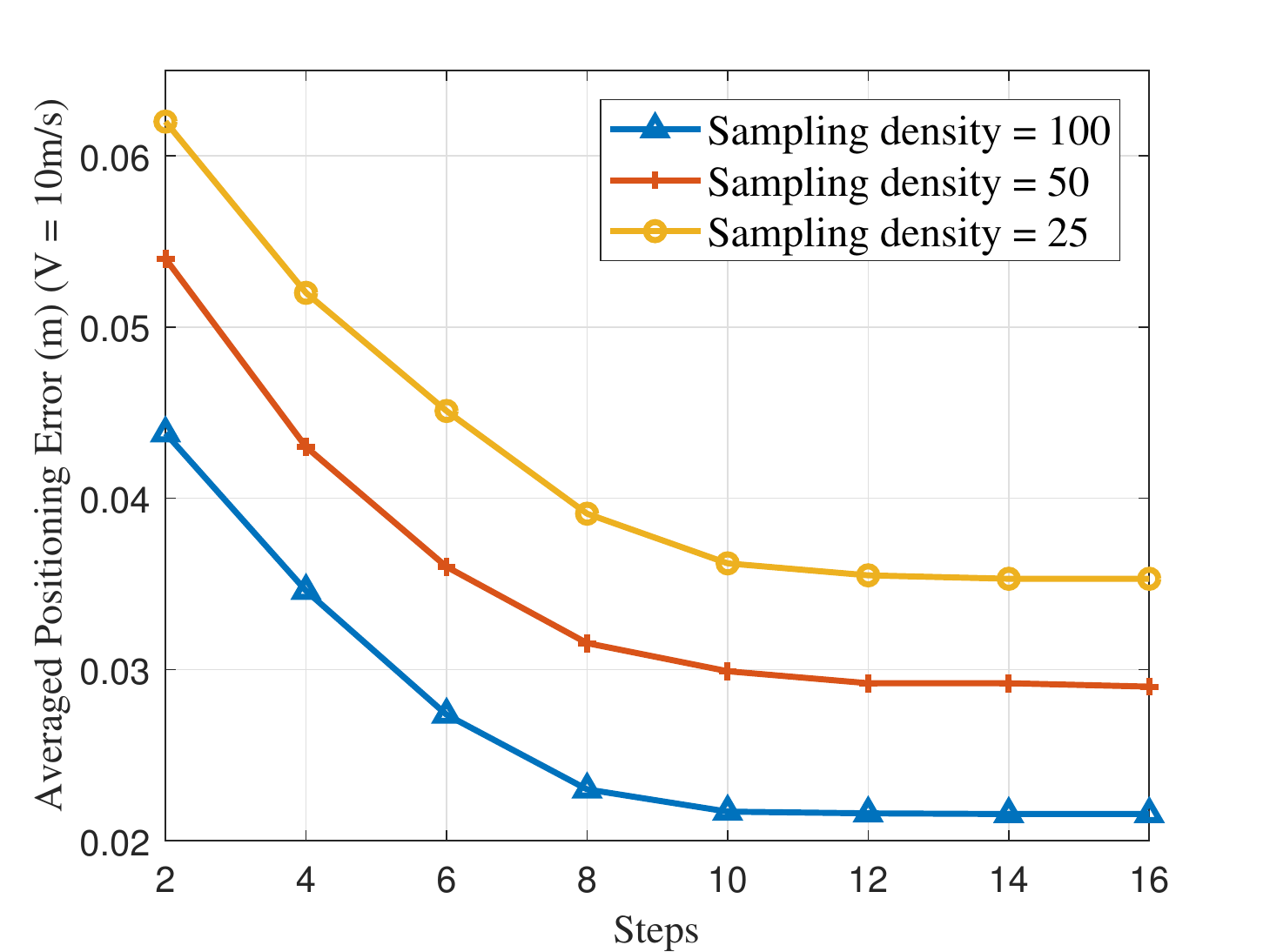}
\caption{Curves of positioning accuracy varies with the sequence length under different settings of settings of sampling density.}
\label{result2}
\end{minipage}
\end{figure}

Since CSI changes dramatically in space, positioning accuracy directly influences the final prediction accuracy. With the setting of sampling density of $100$, two start-of-the-arts are compared with our proposed network under different velocity settings. We keep the amounts parameters of the comparing network approximately the same to ensure the fairness of the comparison. As shown in Fig. \ref{result1}, the positioning network with CNN can achieve better performance with the angular-delay domain channel owing to the reason that channel in the angular-delay domain has some features similar to images. However, our proposed network can achieve higher accuracy. The reason is that we divide the channel matrix in the carrier, reducing the input dimension of each LSTM cell. Besides that, the network can extract channel features from the sequence perspective and reduce each individual cell's learning difficulty.

Fig. \ref{result2} shows the performance of our proposed iterative positioning method under three settings of sampling density. Here, we set the velocity of the UE as $40$m/s to show that our algorithm can handle positioning at high moving speed. It is obvious that using sequence CSI can achieve far higher accuracy than single CSI positioning. All three curves converge at steps = $10$, which means we only need the measured channel in the past $10$ time slots to reach the highest positioning accuracy. Results show that When the training data is sufficient, the error can be controlled to about one-fifth of the electromagnetic wavelength. Thus, the iterative algorithm ensures that the input of the following Neural ODE is close to the ground truth. We believe two factors lead to the results. One is that more information is taken used. The other more important factor is that the prior information of UE's motion mode is well combined with the positioning algorithm.

Fig. \ref{nmse1} illustrates two kinds of comparison. The first is the comparison between the Neural ODE-based method with the state-of-the-art channel static method we mentioned above, and the other is the comparison between SCGnet and normal MLP. Also, the prediction NMSE under three sampling densities is compared to show the dependence of different methods on data volume. Here, we set the velocity of UE to $10$m/s, which is common in actual scenarios. It is worth mentioning that the comparing method of Neural ODE with MLP is simply done by replacing the SCGnet of the proposed method with normal MLP and keeping the total number of parameters approximately the same. Results show that adopting the Neural ODE learning structure can lead to excellent performance gain compared with the traditional method. One interesting result is that in the comparing method, the prediction accuracy suffers a great decrease when the sampling density switches from $50$ to $25$. However, in the Neural ODE based methods, the accuracy will not decrease significantly even at a lower sampling density, proving our proposed method's feasibility. Switching from normal MLP to the proposed SCGnet, the prediction accuracy is obviously improved under all three sampling density settings. This result is natural since instead of directly learning the complex mapping from CSI the gradient, our proposed network integrates the physics prior information into the network design and dramatically reduces the learning complexity of the network.

\begin{figure}[htb!]
	\centering
	\includegraphics[width=0.4\textwidth]{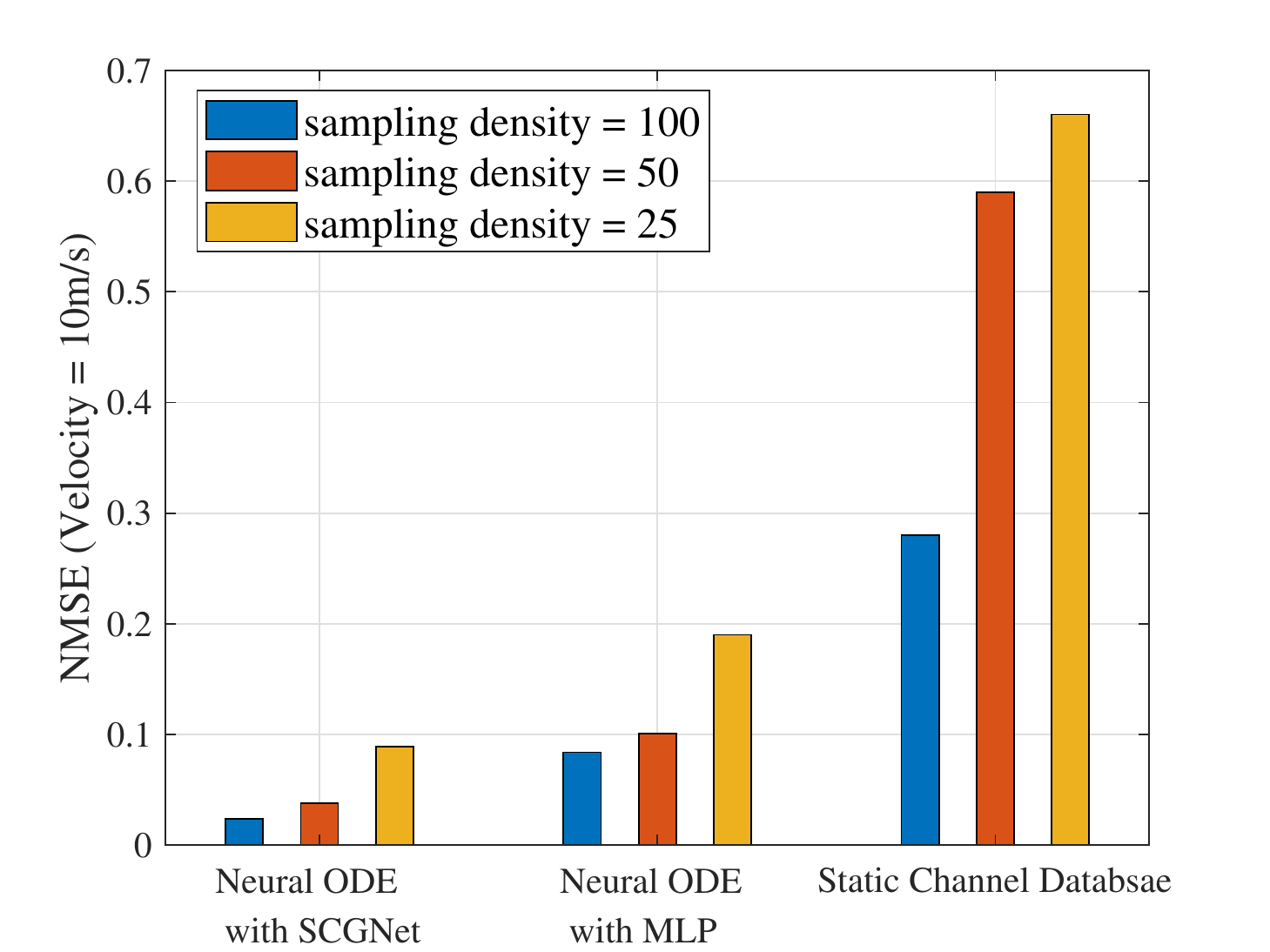}
	\vspace{-.5em}
	\caption{Prediction NMSE under different sampling density of three comparing methods.}
	\label{nmse1}
	\vspace{-.5em}
\end{figure}

\begin{figure}[htb!]
	\centering
	\subfigure[Sampling density = $50$]{\label{nmse21}
		\begin{minipage}[t]{0.75\linewidth}
			\centering
			\includegraphics[width=1\textwidth]{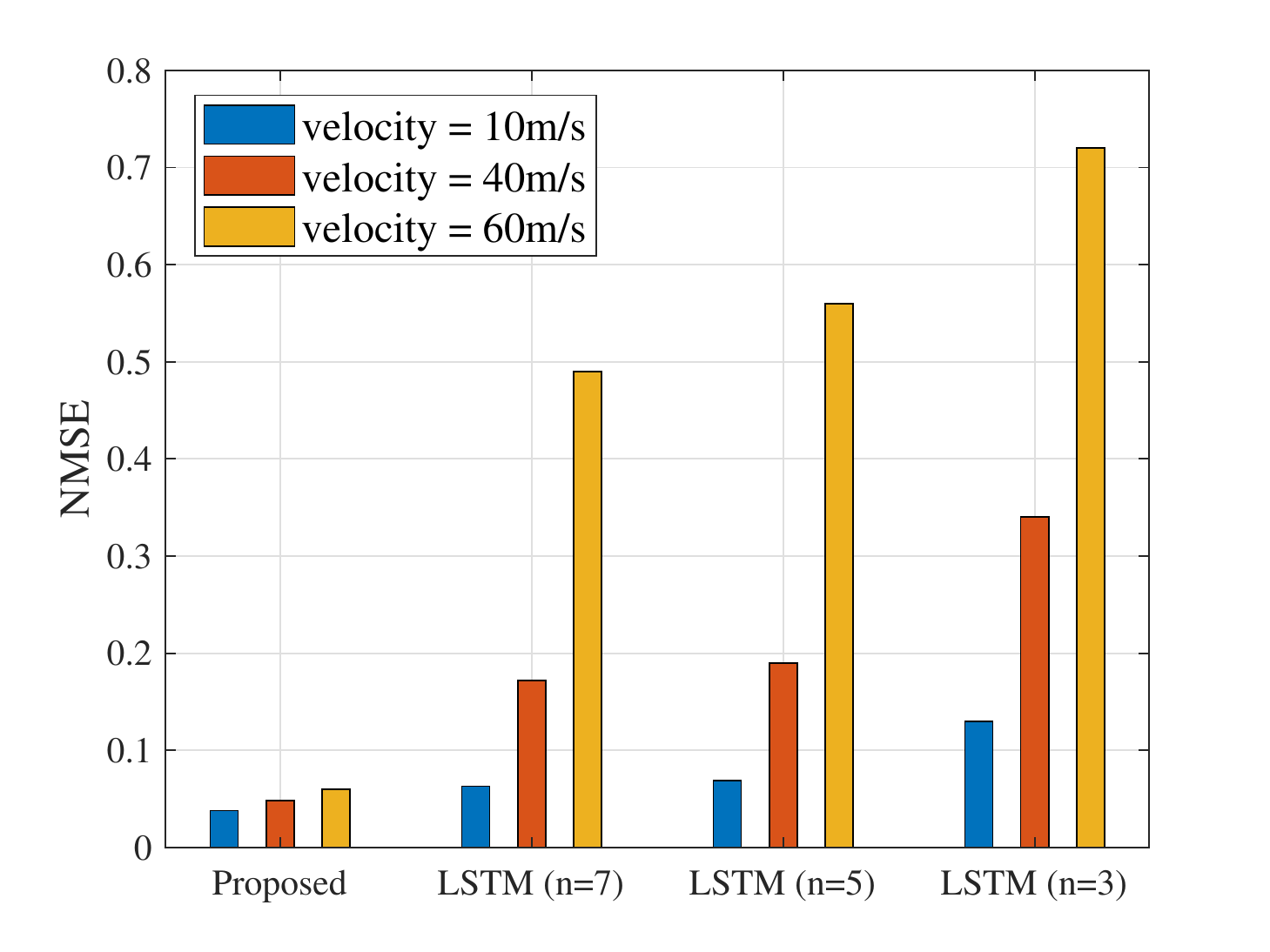}
		\end{minipage}

	}
	\subfigure[Sampling density = $100$] {\label{nmse22}
		\begin{minipage}[t]{0.75\linewidth}
			\centering
		\includegraphics[width=1\textwidth]{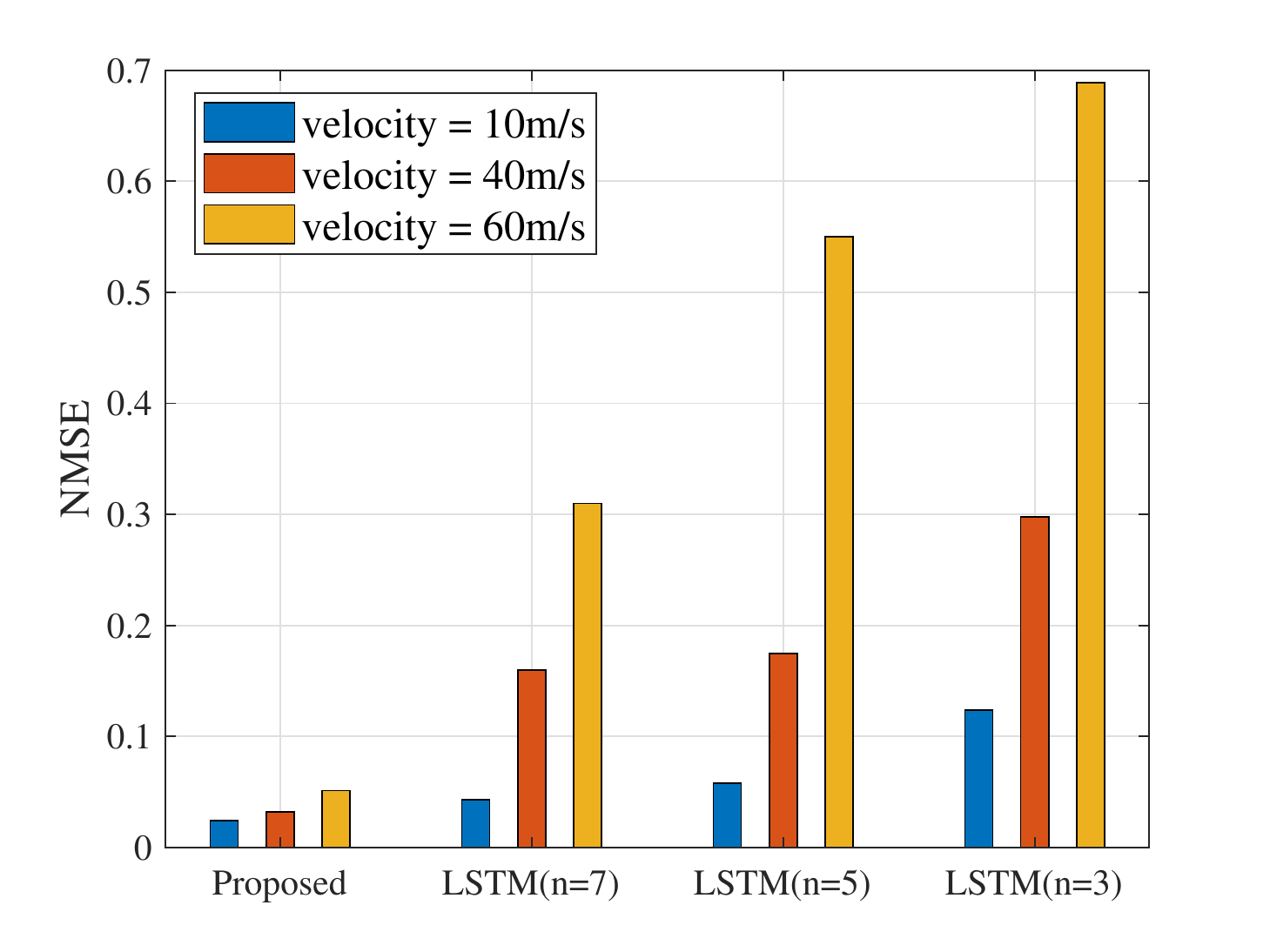}
		\end{minipage}
	}
	\centering
	\caption{Prediction NMSE comparison between proposed method and LSTM of different lenghths under different settings of UE's velocity.}
	\label{nmse2}
	\vspace{-.5em}
\end{figure}

In order to make a more comprehensive comparison between our proposed method and the LSTM network, three different lengths are adopted. Besides that, different UE's velocities and sampling densities are tested. The sequence interval of the training dataset for LSTM is $1$ms, and both methods output channel matrix in the angular-delay domain. Since the LSTM network with a small training dataset is difficult to converge, we only make the comparison with the sampling density of $50$ and $100$. The overall results are shown in Fig. \ref{nmse2}. It can be seen that with the increase in sequence length, the prediction accuracy can be improved for the LSTM network. However, the prediction accuracy of length = $7$ and length = $5$ are very close, especially in the low-speed scenario. So, we believe the prediction accuracy is close to optimal when the length = $7$. It can be seen that,  benefiting from the targeted design for unique physics processes, our proposed method can achieve much higher accuracy than the benchmark in any scenario. Besides that, it can be seen from the results that the sequence-based network is quite sensitive to the velocity of UE, and the performance will decrease tremendously when the speed increases. The main reason behind this is that the performance of these kinds of networks highly relies on the correlation between sequential data. With the increase of UE's moving speed, the spatial spacing between sequence sampling points will also increase, weakening spatial correlation between sequences. On the contrary, our proposed method can maintain high prediction accuracy even with UE's fast movement.

\begin{figure}
	\centering
	\includegraphics[width=0.35\textwidth]{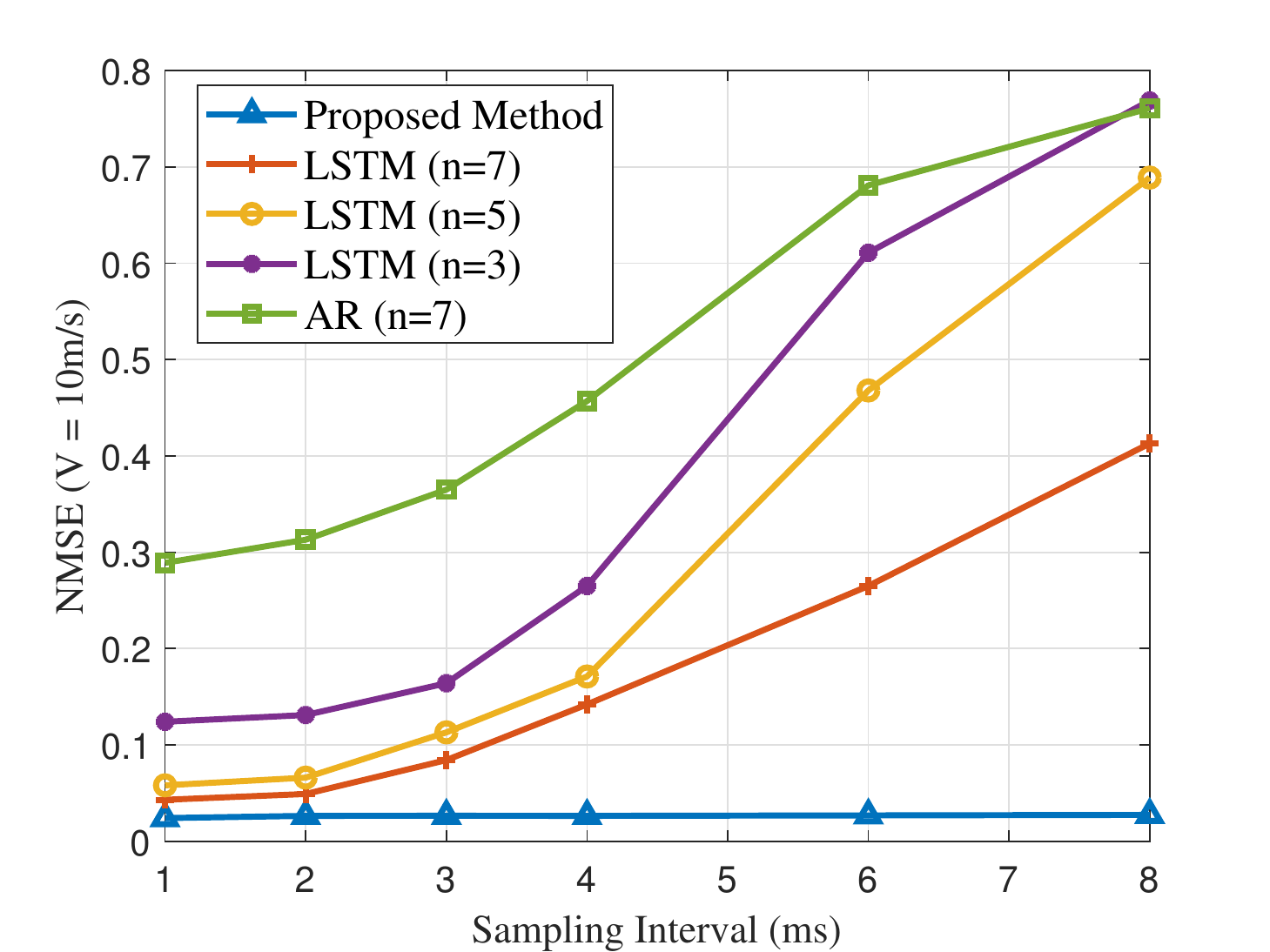}
	\caption{Prediction NMSE under different sampling intervals of the three comparing methods.}
	\label{nmse4}
\end{figure}

In order to better illustrate the advantage of our proposed methods over sequential learning structure, Fig. \ref{nmse4} compares
the prediction performance of the two kinds of learning-based approaches with greater
spatial spacing between adjacent sampling points by increasing the sampling time interval. Besides that, the traditional AR model with length = $7$ is added as a benchmark. It is worth noting that such an experiment is of practical significance since reducing sampling intervals can help save pilots needed in channel estimation. As can be seen from the results, although still better than the AR model, the prediction of LSTM networks dropped sharply with the increasing sampling interval. For the case of length = $3$ and length = $5$, the networks are even harder to converge with longer intervals. However, the sampling interval has no influence on the prediction accuracy of our method in an ideal scenario where no velocity of UE changes happened in this period. Therefore, our scheme is more feasible in practical application.

\section{CONCLUSION} \label{chap:conclusion}
In this paper, we addressed the practical challenge of efficiently and accurately mobile MIMO channel prediction with only a set of channel instances obtained within a certain wireless communication environment. We proved that the spatial channel gradient is only decided by the CSI itself and the spatial direction. Therefore, the channel prediction problem was modeled as an ordinary differential equation problem with a known initial value. Innovatively, we proposed a physics-inspired SCGnet to implicitly represent the channel gradient with respect to spatial displacement at any location and adopt the Neural ODE learning structure to update the parameters and do the forward calculation. Further, to obtain the integral length and direction of the Neural ODE, we proposed a novel positioning network and an iterative algorithm for sequential positioning. Experiments have shown the effectiveness of our proposed learning scheme. The values predicted through our method were very close to the ground truth generated through the Ray-Tracing model.

It is an encouraging achievement for us to open a new prospect of switching from the past way of adopting a purely data-driven universal learning architecture to design a physics-inspired learning structure aimed at unique data characteristics or tasks in wireless communication. In fact, different from many other machine learning tasks, there is a lot of unique prior knowledge in wireless communication that has not been fully used in existing works. Our work is a good example to illustrate the detailed thoughts of fusing prior information into the design of a network structure or learning scheme. We believe that our work has the potential to guide the learning system design of sensing wireless communication scenes and intelligent wireless communication in the future.

\bibliographystyle{IEEEtran}
\bibliography{bibfile}

\begin{IEEEbiography}[{\includegraphics[width=1in,height=1.25in,clip,keepaspectratio]{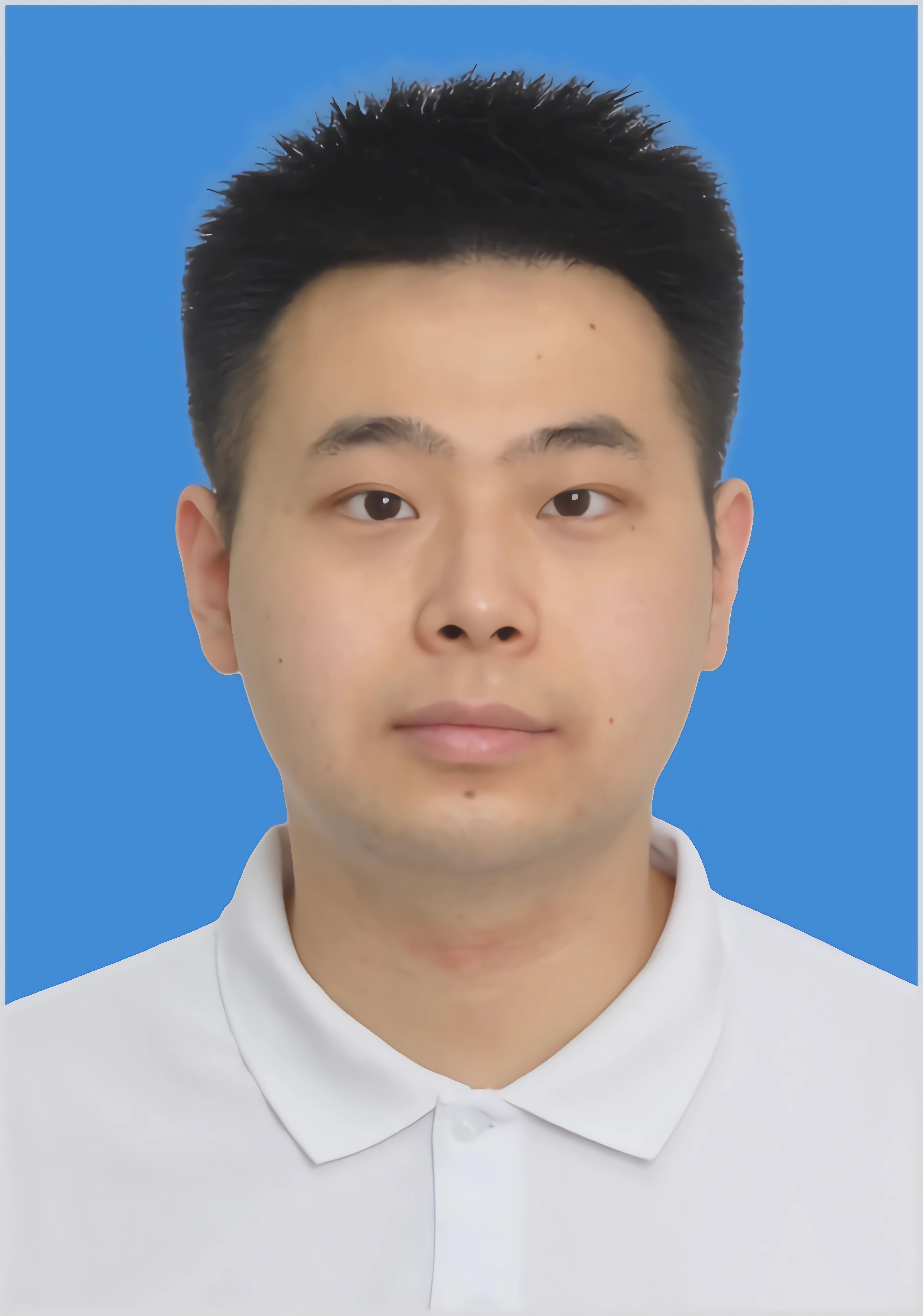}}]{Zhuoran Xiao}
	(Student Member, IEEE) received the B.S. degree in information engineering from Huazhong University of Science and Technology, Wuhan, China, in 2018, and the Ph.D. degree in information and communication engineering from Zhejiang University, Hangzhou, China in 2023, under the supervision of Prof. Zhaoyang Zhang. He is currently the Research Scientist with the Nokia Bell Labs China. His current research interests include AI/ML for next-gen communication and massive MIMO.
\end{IEEEbiography}

\begin{IEEEbiography}[{\includegraphics[width=1in,height=1.25in,clip,keepaspectratio]{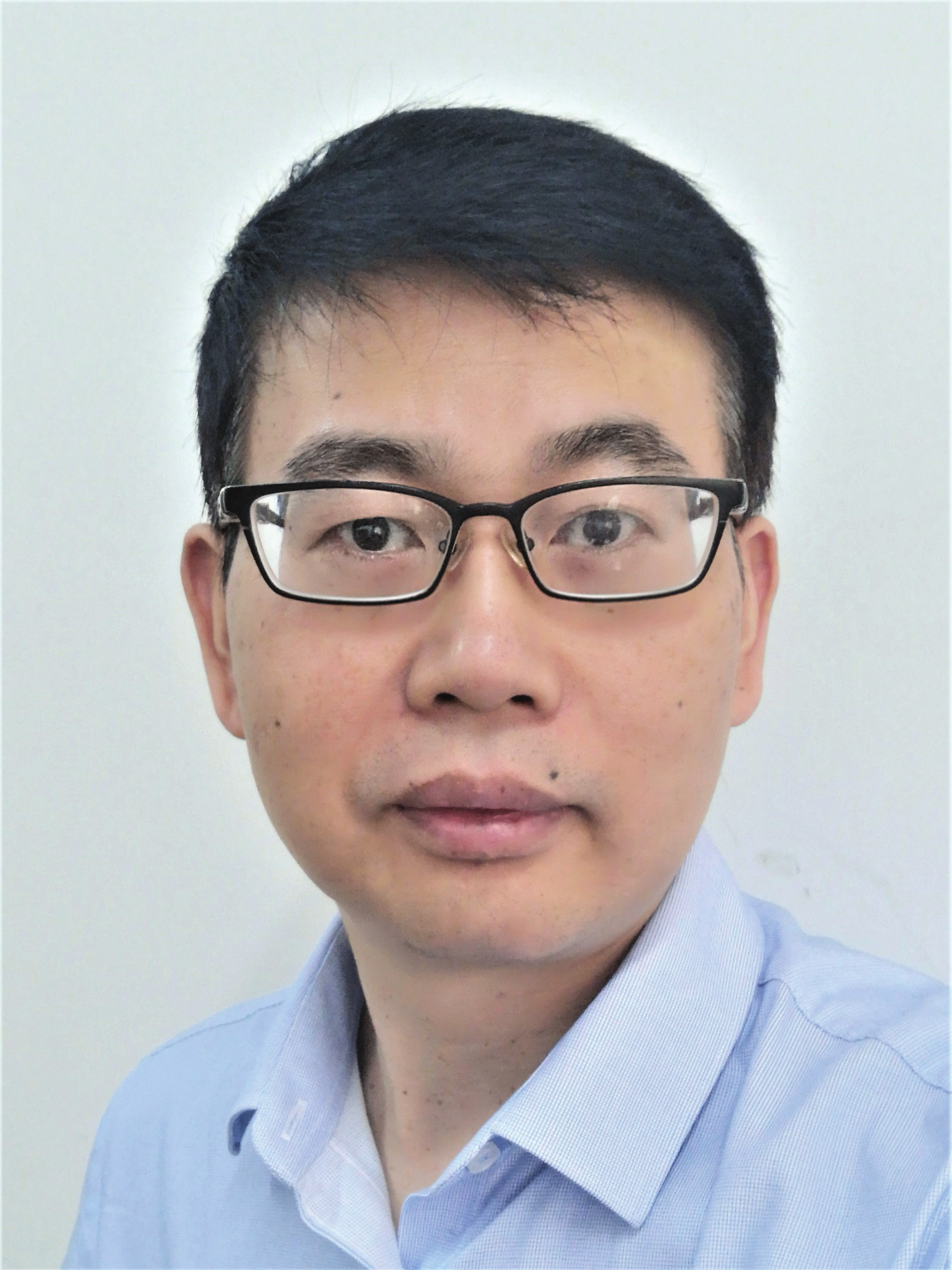}}]{Zhaoyang~Zhang} (Senior Member, IEEE) received his Ph.D. degree from Zhejiang University, Hangzhou, China, in 1998, where he is currently a Qiushi Distinguished Professor. His research interests are mainly focused on the fundamental aspects of wireless communications and networking, such as information theory and coding theory, networked signal processing and distributed learning, AI-empowered communications and networking, and integrated communication, sensing and computing, etc. He has co-authored more than 400 peer-reviewed international journal and conference papers, including 8 conference best papers from IEEE ICC 2019 and IEEE Globecom 2020, etc. He was awarded the National Natural Science Fund for Distinguished Young Scholars by NSFC in 2017, and was a co-recipient of the First Grade State-level Teaching Award for Graduate Education in 2023. 
	
Dr. Zhang is serving or has served as Editor for \textsc{IEEE Transactions on Wireless Communications} and \textsc{IEEE Transactions on Communications}, etc., as Lead Guest Editor for \textsc{IEEE Wireless Communications} Special Issue on Sustainable Big AI Model for Wireless Networks, and as General Chair, TPC Co-Chair or Symposium Co-Chair for WCSP 2023/2018/2013, PIMRC 2021 Workshop on Native AI Empowered Wireless Networks, VTC-Spring 2017 Workshop on HMWC, Globecom 2014 Wireless Communications Symposium, etc. He was also a keynote speaker for IEEE Globecom 2021 Workshop on Native-AI Wireless, APCC 2018 and VTC-Fall 2017 Workshop on NOMA, etc.
\end{IEEEbiography}

\begin{IEEEbiography}[{\includegraphics[width=1in,height=1.25in,clip,keepaspectratio]{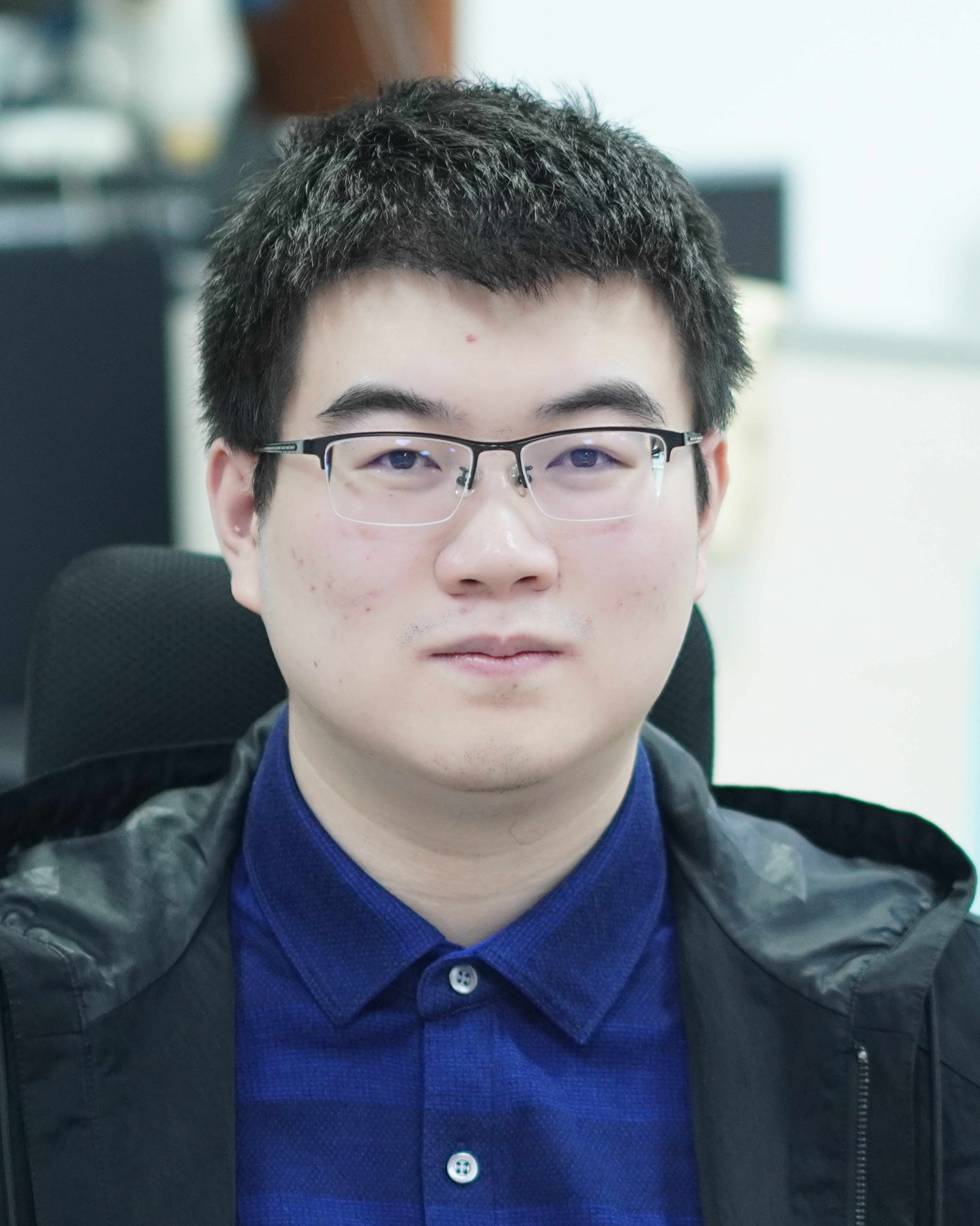}}]{Zirui Chen}
	(Student Member, IEEE) received the B.S.Eng. degree in information engineering from Zhejiang University, Hangzhou, China, in 2021. He is currently pursuing the Ph.D. degree in information and communication engineering with Zhejiang University under the supervision of Prof. Zhaoyang Zhang. His current research interests include AI-empowered communications and massive MIMO.
\end{IEEEbiography}

\begin{IEEEbiography}
	[{\includegraphics[width=1in,height=1.25in,clip,keepaspectratio]{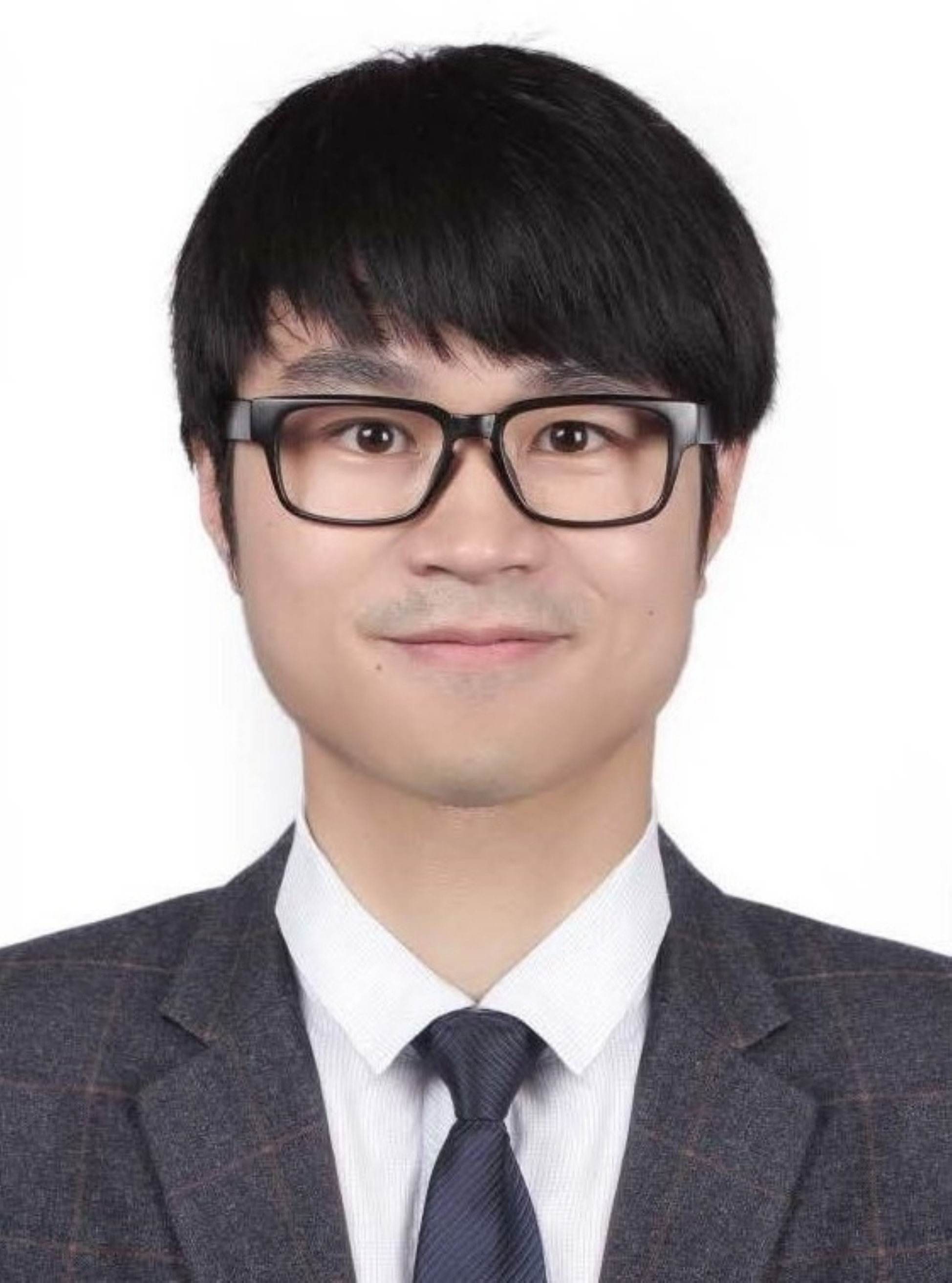}}]{Zhaohui~Yang} (Member, IEEE) received the Ph.D. degree from Southeast University, Nanjing, China, in 2018. From 2018 to 2020, he was a Post-Doctoral Research Associate at the Center for Telecommunications Research, Department of Informatics, King’s College London, U.K. From 2020 to 2022, he was a Research Fellow at the Department of Electronic and Electrical Engineering, University College London, U.K. He is currently a ZJU Young Professor with the Zhejiang Key Laboratory of Information Processing Communication and Networking, College of Information Science and Electronic Engineering, Zhejiang University, and also a Research Scientist with Zhejiang Laboratory. His research interests include joint communication, sensing, and computation, federated learning, and semantic communication. He received the 2023 IEEE Marconi Prize Paper Award, 2023 IEEE Katherine Johnson Young Author Paper Award, 2023 IEEE ICCCN best paper award, and the first prize in Invention and Entrepreneurship Award of the China Association of Inventions. He was the Co-Chair for international workshops with more than ten times including IEEE ICC, IEEE GLOBECOM, IEEE WCNC, IEEE PIMRC, and IEEE INFOCOM. He is an Associate Editor for the IEEE Communications Letters, IET Communications, and EURASIP Journal on Wireless Communications and Networking. He has served as a Guest Editor for several journals including IEEE Journal on Selected Areas in Communications.
\end{IEEEbiography}

\begin{IEEEbiography}[{\includegraphics[width=1in,height=1.25in,clip,keepaspectratio]{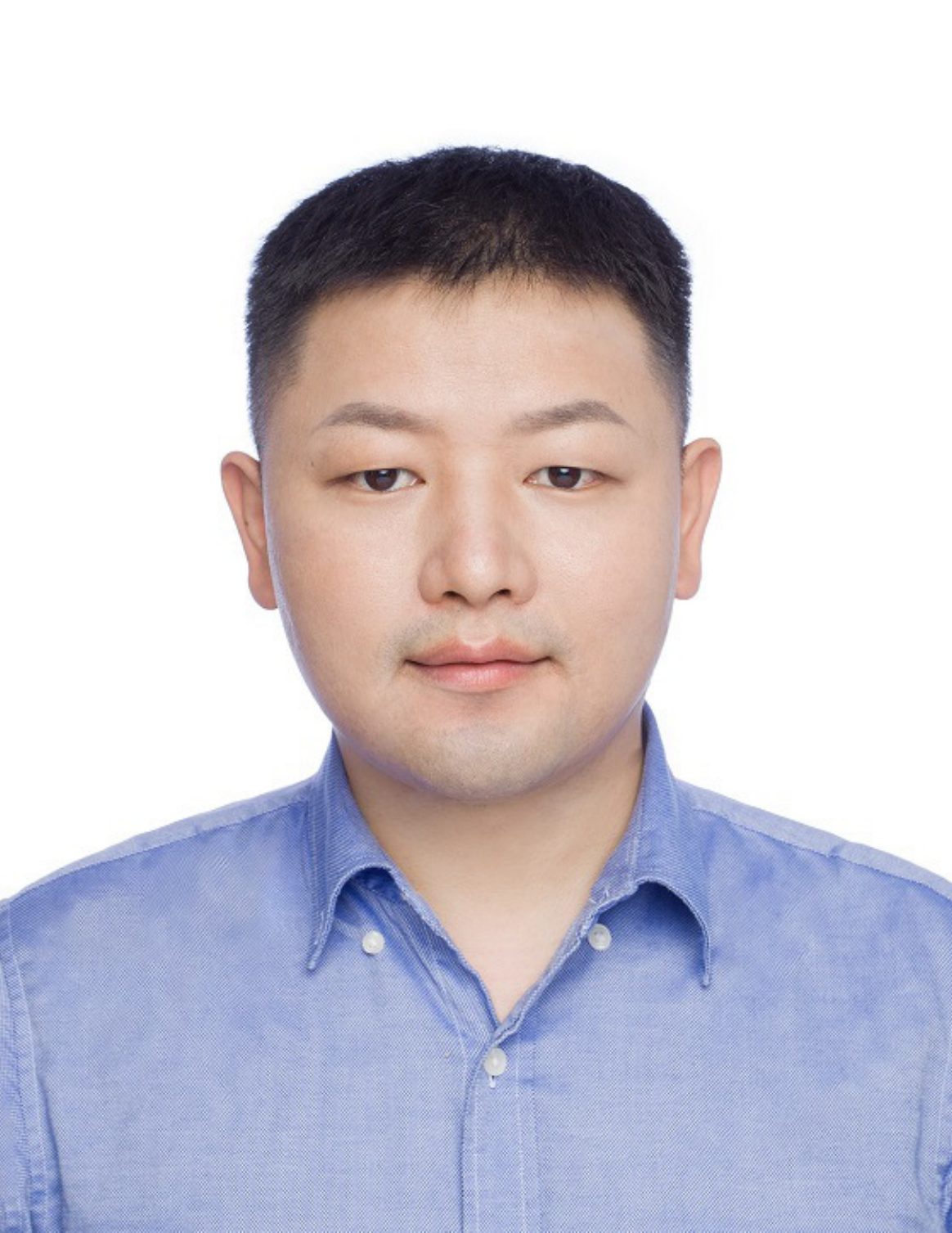}}]{Chongwen Huang} (Member, IEEE) obtained his B. Sc. degree in 2010 from Nankai University, and the M.Sc degree from the University of Electronic Science and Technology of China in 2013, and PhD degree from Singapore University of Technology and Design (SUTD) in 2019. From Oct. 2019 to Sep. 2020, he is a Postdoc in SUTD.  Since Sep. 2020, he joined into Zhejiang University as a tenure-track young professor. Dr. Huang is the recipient of 2021 IEEE Marconi Prize Paper Award, 2023 IEEE Fred W. Ellersick Prize Paper Award and 2021 IEEE ComSoc Asia-Pacific Outstanding Young Researcher Award. He has served as an Editor of IEEE Communications Letter, Elsevier Signal Processing, EURASIP Journal on Wireless Communications and Networking and Physical Communication since 2021. His main research interests are focused on Holographic MIMO Surface/Reconfigurable Intelligent Surface, B5G/6G Wireless Communications, mmWave/THz Communications, Deep Learning technologies for Wireless communications, etc.
\end{IEEEbiography}

\begin{IEEEbiography}[{\includegraphics[width=1in,height=1.25in,clip,keepaspectratio]{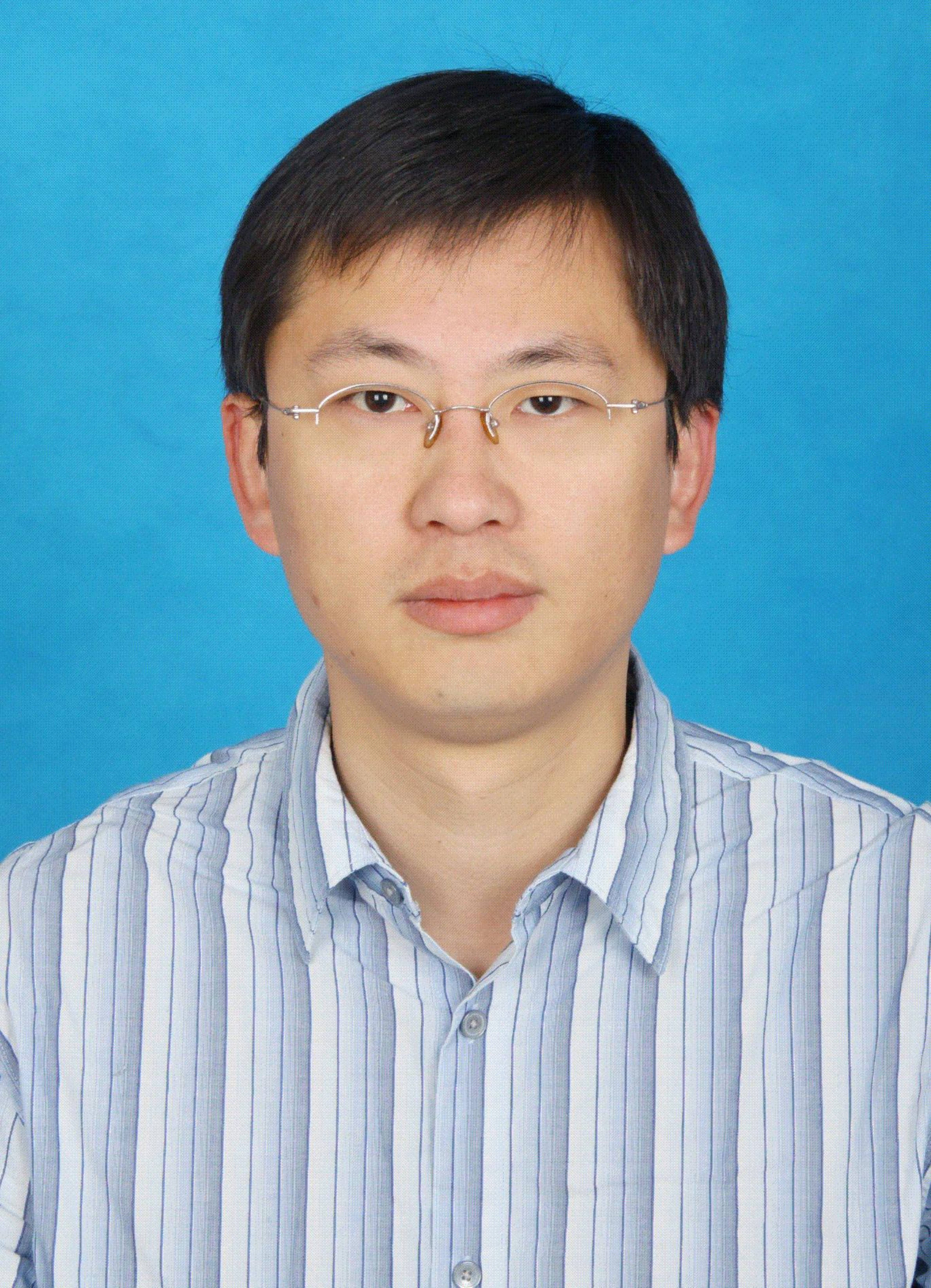}}]{Xiaoming Chen} (Senior Member, IEEE)
	received the B.Sc. degree from Hohai University in 2005, the M.Sc. degree from Nanjing University of Science and Technology in 2007 and the Ph. D. degree from Zhejiang University in 2011, all in electronic engineering. He is currently a Professor with the College of Information Science and Electronic Engineering, Zhejiang University, Hangzhou, China. From March 2011 to October 2016, He was with Nanjing University of Aeronautics and Astronautics, Nanjing, China. From February 2015 to June 2016, he was a Humboldt Research Fellow at the Institute for Digital Communications, Friedrich-Alexander-University Erlangen-N\"urnberg (FAU), Germany. His research interests mainly focus on LEO satellite constellation, Internet of Things, and smart communications.
	
	Dr. Chen served as an Editor for the \textsc{IEEE Transactions on Communications} and the \textsc{IEEE Communications Letters}, and a Guest Editor for the \textsc{IEEE Journal on Selected Areas in Communications} ``Massive Access for 5G and Beyond" and the \textsc{IEEE Wireless Communications} ``Massive Machine-Type Communications for IoT". He received the Best Paper Awards at the IEEE Global Communications Conference (GLOBECOM) 2020, the International Conference on Wireless Communications and Signal Processing (WCSP) 2020, the IEEE International Conference on Communications (ICC) 2019, and the IEEE/CIC International Conference on Communications in China (ICCC) 2018.
\end{IEEEbiography}

\end{document}